\newtheorem{proposition}{Proposition}
\newtheorem*{splitting-stability}{Assumption (Splitting Stability)}
\newtheorem{assumption}{Assumption}
\theoremstyle{remark}
\newtheorem*{remark}{Remark}
\DeclareMathOperator{\E}{\mathbf{E}}
\let\P\relax\DeclareMathOperator{\P}{\mathbf{P}}
\DeclareMathOperator{\Var}{\mathbf{Var}}
\DeclareMathOperator{\Cov}{\mathbf{Cov}}
\DeclareMathOperator{\tr}{tr}
\DeclareMathOperator*{\argmax}{arg\,max}
\DeclareMathOperator*{\argmin}{arg\,min}
\DeclareMathOperator{\TV}{TV}
\DeclareMathOperator{\dist}{dist}
\DeclareMathOperator{\1}{\mathbf{1}}
\def\R{\mathbf{R}}
\def\X{\mathcal{X}}
\DeclareMathOperator{\RF}{RF}
\def\Dto{\xRightarrow{\;\rm dist\;}}
\def\Pto{\xrightarrow{\;\P\;}}
\def\coloneqq{\coloneq}
\title{Asymptotic Normality for Multivariate Random Forest Estimators}
\author{Kevin Li\\ Massachusetts Institute of Technology \\ \href{mailto:kkli@mit.edu}{kkli@mit.edu}
\thanks{I would like to thank my advisors Alberto Abadie and Victor Chernozhukov for reviewing multiple drafts of this paper.
In addition, Sophie (Liyang) Sun, Ben Deaner, participants in the seminar class 14.386 (Spring 2020), and participants
of MIT's Econometrics Lunch seminar provided very useful feedback. I would also like to thank Professor Stefan Wager
for helping me understand mechanics of covariance estimates in random forest models.}}
\date{December 16, 2020}
\begin{document}
\maketitle
\begin{abstract}
\noindent
  Regression trees and random forests are non-parametric estimators
that are widely used in data analysis. A recent paper by
Athey and Wager shows that the pointwise random forest estimate
is asymptotically Gaussian. In this paper, we extend their result to the
multivariate case and show that a vector of estimates, taken at multiple
points, is jointly asymptotically normal. Specifically, the covariance matrix of the
limiting normal distribution is diagonal, so that the estimates at any
two points are independent in sufficiently deep trees. We show that
the off-diagonal terms are bounded by quantities related to the probability
of two given points belonging in the same leaf of the resulting tree. Our results
rely on certain stability properties of the underlying tree estimator,
and we give examples of splitting rules for which this holds.
We also provide a heuristic and numerical simulations
for gauging the decay of the off-diagonal term in finite samples.
\end{abstract}
\onehalfspacing
\section{Introduction}
Trees and random forests are non-parametric estimators first
introduced by Breiman \cite{breiman01}.  Given a feature space
$\X \subset \R^p$ and a set of data points
$\{ (X_i, Y_i) \} \subseteq \X \times \R$, tree estimators recursively
partitions the feature space into axis-aligned non-overlapping
hyperrectangles\footnote{When the feature space $\X$ need not be
  rectangular, one may always enlarge $\X$ to a rectangular set $\X'$
  that is defined to the intersection of all rectangular sets
  containing $\X$.} by repeatedly splitting $\X$ along a given axis.
The prediction of the tree estimator at a test point $x \in \X$ is
then an aggregate of the targets $Y_i$'s that land in hyperrectangle
containing $x$; when $Y_i$ is continuous, the aggregate is the sample
mean and the tree is also known as a regression tree. The depth of a
tree estimator---defined as the maximal number of splits taken before
reaching a terminal hyperrectangle---controls the complexity of the
tree estimator. There are two popular methods for controlling
complexity: the ``boosting'' approach grows trees of large depth, then
reduces complexity by either trimming the tree (i.e., so that
predictions are made at a non-terminal hyperrectangle) or introducing
a decay factor; the ``bagging'' approach instead grows a collection of
shallow trees on different subsets of the data, and averages over the
trees for the final prediction.  The intuition for bagging is that
trees grown on different subsets are not perfectly correlated, so that
aggregation reduces variance and balances the bias-variance
tradeoff. Estimators of this type are called random forests, and they
are the focus of this paper.

Since their introduction in the early 2000s, random forests have become an increasingly important
tool in applied data analysis, owing to a multiple of practical advantages over competing models.
First, high-quality random forest libraries are readily available, with popular implementations
that scale to hundreds of distributed workers \cite{lightgbm,xgboost}. Moreover, the core algorithm
behind tree estimators and random forests are simple enough to allow for rapid prototyping
of bespoke implementations, e.g. \cite{grfgithub}. Another advantage of tree-based methods
is that they can ingest real-world data without much issue: continuous, discrete, and ordered
categorical features
may be freely mixed\footnote{Splits on discrete features partitions that variable
into two arbitrary non-empty sets; no changes are needed for ordered categorical features.}~\cite{catboost},
model estimates are immune to feature outliers, and missing data may be easily incorporated.
Firstly, their construction
naturally aligns with the spatial locality of most applied: that is, the underlying target function relating
$Y$ to $X$ is continuous. Finally, tree models are interpretable, with well-defined
notions of feature importance \cite{gregorutti2017correlation,strobl2008conditional}, which
supports their use as model selection tools \cite{genuer2010variable}.

Within economics, random forests may be fruitfully applied to estimate heterogeneneous
treatment effects. In Rubin's potential outcomes framework \cite{rubin} (see \cite{imbens_rubin_2015}
for an overview), an individual $i$ is associated with two potential outcomes $Y^{(0)}$
and $Y^{(1)}$, with one of the outcomes being realized depending on whether $i$ undergoes treatment.
The statistician has access the IID observations $\{ (X_i, W_i, Y_i) : 1 \leq i \leq n \}$, where
$X_i$ is a vector of observed covariates for individual $i$, $W_i \in \{ 0, 1 \}$ is (an encoding of)
their treatment status, and $Y_i = Y_i^{(W_i)}$ is her realized outcome. One quantity
of interest is the treatment effect at $x$
\begin{equation}
\tau(x) \coloneq \E(Y^{(1)}  - Y^{(0)} \mid X_i = x).
\end{equation}
Since only one of $Y^{(0)}_i$ and $Y^{(1)}_i$ is observed, consistent estimation of $\tau(x)$
requires further distributional assumptions. A common assumption is unconfoundedness, i.e.,
that treatment status $W_i$ is independent of $Y^{(1)}$ and $Y^{(0)}$ conditional on $X_i$. Under this assumption,
\begin{equation}
\tau(x) = \E \biggl[
Y_i \biggl( \frac{W_i}{e(x)} - \frac{1-W_i}{1-e(x)} \biggr) \mid X_i = x
 \biggr], \quad \text{where $e(x) = \P(W_i = 1 \mid X_i = x)$.}
\end{equation}
Here, the key function is $e(x)$, known as the propensity score, is the probability of treatment for
the subpopulation with covariates $x$; see \cite{propensity_score} the derivation and implications.
Machine learning methods---including random forests---may be brought to bear on the
problem by estimating $e(x)$.
Alternatively, unconfoundedness also implies
\begin{equation}
\label{eq:21}
\tau(x) = \E(Y \mid W = 1, X = x) - \E(Y \mid W = 0, X = x),
\end{equation}
so that $\tau(x)$ may be estimated by fitting two models, one
on the subset of the sample in which $W = 1$, and the other on $W = 0$.

In econometric applications, conducting pointwise inference on
the target function $f: \X \to \R$ (e.g., to test the null hypothesis $H_0: f(x) = 0$)
requires knowledge about
about the rate of convergence or asymptotic distribution of the underlying
estimator $\hat f(x)$, where $x$ is the point of interest. However,
functionals of target function are often also of interest:
for example, the difference of treatments effects (i.e., $f = \tau$) for two different subpopulations
is captured by the quantity
\begin{equation}
f(x) - f(\bar x),
\end{equation}
where $x$ and $\bar x$ are covariates describing the two subpopulations.
More generally, we might also be interested in a weighed treatment effect, where
a subpopulation $x$ is given an importance weight modeled as a density $\mu(x)$.
In this case, the corresponding functional of $f$ is
\begin{equation}
\int_{x \in \mathcal{X}} f(x) d\mu, \quad \text{where $\mu$ is not necessarily the density of $x$},
\end{equation}
and the integral is taken over the domain $\mathcal{X}$.

Inference on functionals of $f$ requires not only the asymptotic distribution of the point
estimate $f(x)$, but also the correlation between estimates at different
points $f(x)$ and $f(\bar x)$. As a concrete example, consider the function $\tau(x)$
and the simple difference $\tau(x) - \tau(\bar x)$. We have
\begin{equation}
\label{eq:23}
\begin{split}
\tau(x) - \tau(\bar x) &=
[\E(Y \mid W = 1, X = x) - \E(Y \mid W = 1, X = \bar x)] \\
&\qquad \qquad-
[\E(Y \mid W = 0, X = x) - \E(Y \mid W = 0, X = \bar x)]\\
&\eqqcolon A - B.
\end{split}
\end{equation}
We may estimate the difference by estimating $A$ and $B$ separately,
fitting a random forest model to the two ``halves'' of the dataset where $W_i = 1$
and $W_i = 0$, as discussed above. The estimators $\hat A$ and $\hat B$ obtained
are thus independent, so that  $\Var(\hat A - \hat B)=\Var \hat A + \Var \hat B$. 
The variances $\hat A$ and $\hat B$ then depend on the covariance of their respective
random forest estimates at $x$ and $\bar x$.

This paper studies the correlation structure of a class
of random forests models whose asymptotic distributions were first worked out in \cite{crf}.
We find sufficient conditions under which the asymptotic covariance of random forest
estimates at different points vanish relative to their respective variances;
moreover, we provide finite sample heuristics
based on our calculations. To the best of our knowledge, this is the first set of results
on the correlation structure of random forest estimators.

The present paper builds on and extends the results in \cite{crf}, which in turn
builds on related work \cite{Wager2015AdaptiveCO}
on general concentration properties of trees and random forest estimators.
See also \cite{athey2019}, which extends the random forest model considered here to
a broader class of target functions by incorporating knowledge of
moment conditions. Stability results established in this paper have appeared
in \cite{arsov2019stability}, who study the notions of algorithmic stability for
random forests and logistic regression and derive generalization error guarantees. Also closely related
to our paper are \cite{chernozhukov2017} and \cite{chen2018},
concerning finite sample Gaussian approximations of sums and $U$-statistics
in high dimensions. In this context, our paper provides a stepping stone towards applying the theory
of finite sample $U$-statistics to random forests, where bounds on covariance matrices
plays a central role.

The paper is structured as follows. In Section 2, we introduce the random forest model
and state the assumptions required for our results; Section 3 contains our main theoretical
contributions; Section 4 builds on Sections 3 and discusses heuristics useful in finite sample settings;
Section 5 concludes. All proofs are found in the appendix.

\section{Model Setup and Assumptions}
\subsection{Overview of Tree Estimators}
The goal of this paper is to study asymptotic Gaussian approximations
of random forest estimators.  Throughout, we assume that a random
sample
$\{ Z_i = (X_i, Y_i) : 1 \leq i \leq n \} \subseteq \X \times \R$ is
given, where each $X_i$ is a vector of \emph{features} or
\emph{covariates} belonging to a subset $\X \subseteq \R^p$ of
$p$-dimensional Euclidean space, and $Y_i \in \R$ is the \emph{response} or \emph{target}
corresponding to $X_i$.  We will refer to $\X$ as the
feature space or the feature domain.

Given the data set $\{ Z_i \}_{i=1}^n$, a tree estimator recursively partitions
the feature space by making axis aligned splits. Specifically, an axis-aligned split
is a pair $(j, t)$ where $j \in \{ 1, \dots, p \}$ is the \emph{splitting coordinate}
and $t \in \R$ is the \emph{splitting index}; given a subset $\mathcal{R} \subseteq \X$,
a split $(j, t)$ divides $\mathcal{R}$ into left and right halves
\begin{equation}
\label{leftright}
\{ x \in \mathcal{R} : x_j < t \}
\quad \text{and} \quad
\{ x \in \mathcal{R}: x_j > t \},
\end{equation}
where $x_j$ denotes the $j$-th coordinate of the vector $x$. Starting with the
entire feature space $\X$, the recursive splitting algorithm
computes a (axis-aligned) split
based on the data $\{ Z_i: 1 \leq i \leq n \}$; for example, when the target $Y_i$
continuous, a popular choice is
\begin{equation}\label{square-split}
(j, t) = \argmin_{\tilde j, \tilde t}
\sum_{i : X_i \in L} (Y_i - \mu_L)^2  + \sum_{i : X_i \in R} (Y_i - \mu_R)^2
\end{equation}
where $L = L(\tilde j, \tilde t)$ and $R = R(\tilde j, \tilde t)$ are
the two halves of $\X$ obtained by the split $(\tilde j, \tilde t)$,
with $\mu_L$ and~$\mu_R$ being the averages of targets $Y_i$ whose
corresponding feature land in $L$ and $R$, respectively.

After the first split, $\X$ is split into two halves $L$ and $R$.
The process is then repeated for $L$ and $R$ separately, in that a split for $L$
is computed by using the subset of the data whose features $X_i$ belong in $L$,
and likewise for $R$. Each of the halves is then split again, and so on,
until a stopping criterion is met. The process completes
when the stopping criterion is satisfied for each node; at this point,
the collection of halves form a
partition\footnote{According to ~\eqref{leftright}, we exclude edge
cases where points lead on the an ``edge'' of the rectangle. This is not an issue
for continuous variables, while for categorical variables, the definition
should be slightly changed so that one of the halves contain $x_j = t$.
In this paper, we deal only with continuous features.} of $\X$, with each partition---and
all the halves that came before it---being the
intersection of a hyperrectangle with $\X$.
The sequence of splits corresponds to a tree in the natural way; we will
call the halfspaces that arise during the splitting
as \emph{nodes}, and elements of the final
partition \emph{terminal nodes}.

Given the collection $N_1, \dots, N_q$ of terminal nodes (which form a partition
of $\X$), the prediction of the tree at generic test point $x \in \X$
is the average of the responses that belong in the same terminal node as $x$
\begin{equation}\label{final-prediction}
T(x; \xi, Z_1, \dots, Z_n) = \sum_{j=1}^q \1(x \in N_j)
\frac{1}{|N_j|} \sum_{i : X_i \in N_j} Y_j,
\end{equation}
where the outer sum runs over observations $i$ for which $X_i$
belongs to the partition $N_j$, and $|N_j|$ is the number of such observations.
The input $\xi$ is an external source of randomization to allow for randomized
split selection procedures. Thus, $T(x; \xi, Z_1,\dots,Z_n)$
refers to prediction at $x$ for a tree grown using data $\{ Z_1, \dots, Z_n \}$
with randomization parameter $\xi$. As a function of $x$, keeping $Z_1, \dots, Z_n$
and $\xi$ fixed, $T:\X \to \R$ is then a step function, i.e., a linear combination
of indicator functions of rectangular sets.

We note here that equations \eqref{square-split} and \eqref{final-prediction}
are not the only possible choices; in particular, the rule used to choose
the optimal split may be path dependent (i.e., dependent on previous splits)
as in popular implementations (see \cite{xgboost}, which allows
for random forests but uses gradient boosting after the initial split), and
the final prediction rule \eqref{final-prediction} may instead do a final linear fit
or use weighted average (c.f., \cite{lightgbm,xgboost}). Analysis of
tree (and random forest) models is complicated in these cases, so the present paper
stipulates that the algorithm uses a splitting rule that is ``similar'' to
similar to \eqref{square-split} (c.f. Proposition~\ref{prop:split-stability-sufficient})
and uses \eqref{final-prediction} as the final prediction. In particular,
this implies that the target estimated by the tree estimator is the regression
function $x \mapsto \E(Y \mid X = x)$.

\subsection{From Trees to Random Forests}
Given a specific tree estimator $T$, i.e., given a set of splitting rules
which defines the tree estimator, we define the random forest estimator
to be the average of the tree estimator
across all $\binom{n}{s}$ subsamples, marginalizing over the randomization
device $\xi$. Specifically, the random forest estimate $\RF(x)$ at $x \in \X$,
given data $\{ Z_1, \dots, Z_n \}$, is defined to be
\begin{equation}
\label{rf-definition}
\RF(x; Z_1, \dots, Z_n)
=
\frac{1}{\binom{n}{s}}
\sum_{i_1, \dots, i_s}
\E_{\xi} T(x; \xi, Z_{i_1}, \dots, Z_{i_s}),
\end{equation}
where the summation runs over size-$s$ subsets of
$\{ 1, \dots, n \}$, and the inner expectation
is taken with respect to $\xi$. Importantly,
note each tree is grown on a subsample of size $s<n$. We
follow \cite{crf} in assuming that $s \sim n^{\beta}$
for some $\beta$ sufficiently close to one; specifically,
we assume throughout that the subsample size is chosen
as to satisfy the assumptions of Theorem~3
of \cite{crf}, so that---along with other assumptions to be introduced
presently---the random forest estimator $\RF$ is a consistent
estimator of the target function $x \mapsto \E(Y \mid X = x)$
(see discussion above).

In keeping with the notation
\cite{crf}, we will write $T(x; Z_1, \dots, Z_s)$
to refer to the expectation of $T(x; \xi, Z_1, \dots, Z_s)$ over $\xi$.
With this notation, the random forest (at $x$)
estimator \eqref{rf-definition} is a $U$-statistic
with the size~$s$ kernel $(Z_1, \dots, Z_s) \mapsto T(x, Z_1, \dots, Z_s)$.
We will discuss the $U$-statistic representation of $\RF$ in greater
detail in Section 3.

\subsection{Discussion of Model Assumptions}
As our results will be an extension of the results in \cite{crf},
we will study the same model of random forests and adopt a similar set of assumptions.
The assumptions regarding tree estimators
have appeared before in \cite{Wager2015AdaptiveCO}, while the
distributional assumptions on the conditional moments of $Y$ are standard
(see e.g., Chapters 7 and 9 in \cite{hastie2009elements}).

The first---and most bespoke---assumption is that the tree algorithm
is honest. Intuitively, honesty stipulates that that knowledge of the tree structure
does not affect the conditional distribution of tree estimates when
the features are fixed.
\begin{assumption}[Honesty]
The target $Y_i$ and the tree structure (i.e., the splitting coordinates and splitting indices) 
are independent conditional on $X_i$. Specifically, we require
\begin{equation}\label{honesty-assumption}
\dist(Y_i \mid X_i, S)  = \dist(Y_i \mid X_i),
\end{equation}
for all observations $i$ where $Y_i$ participates in the final prediction,
where $S$ is set of splits chosen by the tree algorithm. (The second
equality is automatic as a consequence of independent observations.)
\end{assumption}
There are several ways to satisfy this assumption. The first is to calculate splits
based only the features $X_i$ only. This rules out out the example
splitting rule given in \eqref{square-split}, so we could instead
use its analog in feature space
\begin{equation}
(j, t) = \argmin_{\tilde j, \tilde t}
\sum_{i : X_i \in L} \| X_i - \mu_L \|_2^2  + \sum_{i : X_i \in R} \| X_i - \mu_R \|^2,
\end{equation}
where here $\mu_L$ and $\mu_R$ denote the average (i.e., center of mass)
of the $X_i$ in each halfspace. In this instance, the choice
of splits is essentially a clustering algorithm that finds the best
division of the sample points into two parts.
Another way to satisfy to the honesty assumption while still computing
splits based on the targets is to use sample splitting.  The data is
partitioned into two parts $\mathcal{I}_1$ and $\mathcal{I}_2$;
observations in $\mathcal{I}_1$ and $X_i \in \mathcal{I}_2$ may be
freely used during the splitting process, while
$Y_i \in \mathcal{I}_2$ are used to determine terminal node values. In
this case, equality in \eqref{honesty-assumption} is required to hold
for $i \in \mathcal{I}_2$.
Finally, a third method satisfy honesty requires the existence of
auxiliary data $\{ W_i \}$. During the splitting stage (``model
fitting''), splits are computed \emph{as if} the response variable is
$W_i$; for example,~\eqref{square-split} is used with $\mu_L$ and
$\mu_R$ being the average of the $\{ W_i \}$. Once the tree is fully
grown, predictions (``model inference\footnote{The terminology
  `inference' is used to mean computing the predictions of an existing model,
  which is unrelated to the typical usage of `inference' in
  econometrics. The former terminology is standard in applied
  settings, used when describing a data pipeline: see the documentation
of \cite{tensorflow, sklearn}.}'') are made using $Y_i$'s as usual.
The practice of using such \emph{surrogate targets} is especially
popular in time-series prediction, where different horizons are used
in fitting and inference steps (c.f., \cite{quaedvlieg2019multi}).

In the present paper, for simplicity of notation, we shall assume that the first
scheme is used to satisfy honesty---namely, splitting decisions are based
on the feature vectors $X_i$ only. Our results extend to all three schemes.\footnote{
As in \cite{crf}, constants appearing in our bounds may change in scheme two.}

Our next assumption will ensure that each one of the $p$ axes
is chosen as the splitting coordinate with a probability bounded from below.
\begin{assumption}[Randomized Cyclic Splits]
  When computing the optimal split, the algorithm flips a a
  probability $\delta$ coin that is entirely independent of everything
  else.  The first time the coin lands heads, the first coordinate is
  chosen as the splitting coordinate; the second time, the second
  coordinate is chosen, and so on, such that on the $J$-th time the
  coin lands heads, the $(p \text{ mod } J) + 1$-th coordinate is
  chosen\footnote{We adopt the convention that
    $mJ \text{ mod } J = 0$, hence notation for adding $1$ to
    $(p \text{ mod } J)$.}.  After the random splitting coordinate is chosen,
the splitting index may still be chosen based on the observations.
\end{assumption}
This is a modification of the random splitting assumption in \cite{crf},
in which each of the $p$ axes has a probability $\delta$ of being chosen at each
split. This could be directly implemented by flipping a $p\delta$
and selecting one of the $p$ coordinates uniformly at random to
split when the coin lands heads.  Another method, studied in
\cite{athey2019} and implemented by popular libraries such as \cite{xgboost},
uses randomizes the \emph{number} of available splitting axes in each round.
Specifically, a Poisson random variable $Q$ with intensity proportional
to $\sqrt{p}$ is first realized ($Q$ is realized independently
from round to round). Afterwards, $\min(Q, p)$ many axes are uniformly
selected as potential candidates\footnote{Splitting for that
node ceases if $Q = 0$.}
for splitting in that round. Clearly,
this also leads to a lower bound on the probability of being chosen
for each coordinate $1 \leq j \leq p$.

Each of the two methods above involves two separate rounds of randomization:
first, a random variable encoding the decision to split randomly is made (i.e.,
the coin or the random variable $Q$), and second, the splitting axis
is then determined. Intuitively, our cyclic splitting assumption above forgoes the
second randomization step: in doing so, the variance of the number of times
that any coordinate is chosen is reduced. Importantly, the variance will depend only
on $\delta$ and not on $p$, which we will exploit in our proofs.

\begin{assumption}[The Splitting Algorithm is $(\alpha, k)$-Regular]
  There exists some $\alpha \in (0, 1/2)$ such that
whenever a split occurs in a node with $m$ sample points, the two
hyper-rectangles contains at least $\alpha m$ many points
each. Moreover, splitting ceases at a node only when the node contains less than $2k-1$ points
for some $k$.
\end{assumption}
This key assumption is carried over from \cite{crf} and contains two
requirements.  The first requires that no split may produce a
halfspace containing too few observations, i.e., that both halfspaces
are large when measured the by count of observations.  As shown in
\cite{Wager2015AdaptiveCO}, this implies that with
exponentially small complementary probability, the splitting axis shrinks by a
factor between $\alpha$ and $1-\alpha$, so that both halfspaces also
large in Euclidean volume (with high probability).

The second half of the assumption places an upper bound on the
number of observations in terminal nodes. Trees grown under this assumption
will necessarily be deeper as the sample size $n$
(and thus, the subsample size $s = n^{\beta}$)
increases. In particular, the predictions at leaf nodes---averages of
observations $Y_i$--- will be averages of a bounded number of
terms. An important consequence is that the variance of the tree estimator
(at any test point $x$) is bounded below (c.f., the distributional
assumption on $\Var(Y \mid X = x)$).

\begin{assumption}[Predetermined Splits]
The candidate splits considered at each node do not depend on the data $\{ Z_i \}$,
so that they are fixed ahead of time. Furthermore, the number of candidate splits
at each node is finite, and every candidate split shrinks the
the length of its splitting axis by at most a factor $\alpha$.
\end{assumption}
The predetermined splitting assumption is specific to our paper.  That
candidate splits considered at each node are data-independent is
``almost'' without loss of generality since the feature space $\X$ is
fixed. For example, implementations of random forests typically use
32-bit floating point numbers as their splitting index, so that the
assumption is automatically satisfied.  Furthermore, we note that
the assumption allows candidates splits to depend on the node, so that
the splitting process is still data-driven insofar as the sequence of splits
leading up to node depends on the observations.

One interpretation of this assumption is that it aids in data compression.
For example, suppose that all features are continuous and $\X = [0,1]^p$,
and that all candidate splits have the form $(j, k/2^{m})$ for some integers
$1 \leq j \leq p$ and $0 \leq k < 2^m$, where $m \geq 1$ is a fixed integer.
If a splitting rule such as \eqref{square-split} is used, then the optimal
split depends only on the values $\{ \lfloor 2^m X_{ij} \rfloor : 1 \leq i \leq n, 1 \leq j \leq p \}$
instead of $\{ X_{ij} \}$. Each member of the former set is an integer in
$\{ 0, \dots, 2^m-1 \}$ and thus could be represented using $m$ bits.
In particular, for a grid resolution of $2^8 = 256$---a fine grid even in moderately
large dimensions---each coordinate of the feature vectors $X_i$ may be stored in a single byte.
Since modern CPUs and graphics processors store floating point numbers in four or eight bytes,
this allows for a substantial reduction, allowing computation power to scale to to larger
datasets. The process of encoding  in this way is known as \emph{quantizing}, which is an option
supported by many popular software packages. In this way, though the predetermined split assumption
may seem at first glance restrictive, it aligns our model more closely with practice.

\begin{assumption}[Distributional Assumptions on the DGP of $(X,Y)$]
The features $X_i$ are supported on the unit cube $\X = [0,1]^p$ with a density that is
bounded away from zero and infinity. Furthermore, the functions
$x \mapsto \E(Y \mid X = x)$, $x \mapsto \E(Y^2 \mid X = x)$,
and $x \mapsto \E(Y^3\mid X=x)$ are uniformly Lipschitz
continuous. Finally, the conditional variance $\Var(Y \mid X = x)$ is bounded
away from zero, i.e., $\inf_{x \in \X} \Var(Y \mid X = x) > 0$.
\end{assumption}
The continuity and variance bound assumptions are standard. Note that a consequence
of continuity and compactness of the hypercube
is that the conditional moments up to order three are bounded.
Our results will not explicitly depend on knowledge of the density of $X$: however,
the density will affect the implicit constants that we carry throughout our proofs
(c.f., Lemma 3.2 and Theorem 3.3 in \cite{crf}).

\section{Gaussianity of Multivariate $U$-Statistics}
\subsection{Test Points and Notational Conventions}
We begin investigation of the random forest estimator in this section. As discussed in the model
introduction, the random forest estimator $\RF(x)$ at a test point $x$ is a $U$-statistic
where the kernel is tree estimator $T(x)$ marginalized over external randomizations.
This paper studies the \emph{multivariate} distribution of $\RF$, specifically the correlation
structure between $\RF(x)$ and $\RF(\bar x)$ at distinct points $x$ and $\bar x \in \X$.
Towards that end, we shall fix a collection of $q$ test points
\begin{equation}
x_1, \dots, x_q \in \X
\end{equation}
throughout the remainder of the paper. As these points will remain fixed, for notational brevity
\emph{we will omit their explicit dependence} when writing estimators. Therefore, $\RF(Z_1, \dots, Z_n)$
stands for the \emph{$q$}-dimensional estimator that is the random forest evaluated
at $x_1, \dots, x_q$, given observations $\{ Z_i : 1 \leq i \leq n \}$. As a consequence of notation,
most of our equations are to be understood in $\R^q$, with equality and and arithmetic acting
coordinate-wise. Finally, when there is no confusion, subscripts (typically $k$ or $\ell$)
typically denote a specific coordinate, i.e., the estimate at the $k$-th or $\ell$-th test point;
a notable exception is $T_1$, which refers to a Hajek projection that we now describe.

\subsection{Hajek Projections}
We start by reviewing properties of the H\"oeffding Decomposition of $U$-statistics,
also known as Hajek projections; see \cite{vdv} for a textbook treatment of the univariate case.
Let $f(Z_1, \dots, Z_m) \in \R^q$ be a generic $q$-dimensional statistic based on $m$ observations.
The \emph{Hajek projection} of $f$ is defined to be
\begin{equation}
\mathring f(Z_1, \dots, Z_m) = \sum_{i=1}^m \E[f(Z_1, \dots, Z_m) \mid Z_i]
- (m-1) \E f(Z_1, \dots, Z_m).
\end{equation}
That is, it is the coordinate-wise projection of $f$ to the linear space spanned
by functions of the form $\{ g(Z_i) : 1 \leq i \leq m \}$. In particular,
when $f$ is symmetric in its arguments and $Z_1, \dots, Z_m$ is an IID sequence,
we have
\begin{equation}
\label{eq:generichajek}
\mathring f(Z_1, \dots, Z_m) = \sum_{i=1}^m f_1(Z_i) - (m-1)\E f,
\end{equation}
where is the function such that $f_1(Z_1) = \E(f \mid Z_1)$, i.e., $f_1(z) = \E(f \mid Z_1 = z)$.

In our setting, applying the Hajek projection to the \emph{centered} statistic $\RF - \mu$, where $\mu$
is the expectation of $\RF$, yields
\begin{equation}
\mathring\RF(Z_1, \dots, Z_n) - \mu = \sum_{i=1}^n \E(\RF - \mu\mid Z_i)
=
\frac{1}{\binom{n}{s}}
\sum_{i=1}^n \E
\biggl[
\sum_{i_1, \dots, i_s} \E_{\xi}T(\xi; Z_{i_1}, \dots, Z_{i_s}) - \mu \mid Z_i
 \biggr],
\end{equation}
where $i_1, \dots, i_s$ run through the $\binom{n}{s}$ size-$s$ subsets of
$\{ 1, \dots, n \}$ as usual. (Recall that $\RF$, $\mu$, and $T$ are all vectors in $\R^q$.)
Since the samples $Z_1, \dots, Z_n$ are independent,
$\E(\E_{\xi}T(\xi;Z_{i_1}, \dots, Z_{i_s}) \mid Z_i) = \mu$
whenever $i \notin \{ i_1, \dots, i_s \}$. As $\{ i_1, \dots, i_s \}$
runs over the size-$s$ subsets of $\{ 1, \dots, n \}$,
there are exactly $\binom{n-1}{s-1}$ many which contain~$i$.
For each of of these subsets,
\begin{equation}
\E(\E_{\xi} T(\xi; Z_{i_1}, \dots, Z_{i_s}) - \mu \mid Z_i) \eqqcolon
T_1(Z_i) - \mu,
\end{equation}
where $T_1(z) = \E_{\xi, Z_2, \dots, Z_s} T(\xi; z, Z_2, \dots, Z_s)$.
Therefore,
\begin{equation}
\label{eq:usum}
\mathring\RF - \mu =
\frac{1}{\binom{n}{s}} \sum_{i=1}^n \biggl( {n-1 \atop s-1} \biggr) (T_1(Z_i) - \mu)
=
\frac{s}{n} \sum_{i=1}^n (T_1(Z_i) - \mu).
\end{equation}

The sequence of observations $Z_1, \dots, Z_n$ is assumed to IID, and this property
is preserved for the sequence $\{ T_1(Z_i) : 1 \leq i \leq n \}$ of projections.
It is easily verified that $\E(\mathring \RF) = \mu$, and the point of the previous
equation is that it expresses the centered statistic $(\mathring \RF - \mu)$ as an average
of centered IID terms, scaled by $s$. This will be our main entry point
in establishing asymptotic joint normality.

\subsection{Asymptotic Gaussianity via Hajek Projections}
The standard technique in deriving the asymptotic distribution
of a $U$-statistic is to establish a lower bound on the
variance of its Hajek projection; this is the approach taken by \cite{crf}
and we follow the approach here.
Let $V$ be the variance of $\mathring\RF$; using \eqref{eq:usum}, we have
\begin{equation}
\label{eq:Vvariance}
V = \Var\biggl[
\frac{s}{n} \sum_{i=1}^n (T_1(Z_i) - \mu)
 \biggr] =
\frac{s^2}{n} \Var(T_1(Z_1)) =
\frac{s}{n} \Var\biggl[
\sum_{i=1}^s T_1(Z_i)
 \biggr]
=
\frac{s}{n} \Var \mathring T \in \R^{q \times q},
\end{equation}
where $\mathring T$ is the Hajek projection of the statistic $T$ as in
\eqref{eq:generichajek}, where $T = \E_{\xi}T(\xi, Z_1, \dots, Z_s) \in \R^q$.

Since the third moment $\E(Y^3 \mid X = x)$ is assumed to be bounded,
conditions for the Lindberg Central Limit Theorem \cite{billingsley2008probability}
easily follows, and applying the triangular CLT, we have the familiar fact
\begin{equation}
V^{-1/2}(\mathring\RF - \mu) \Dto N(0, I),
\end{equation}
where $0$ is the zero vector in $\R^q$ and $I$ the $q \times q$ identity matrix.
\begin{remark}
  The condition that the third moment is bounded is not necessary. The Lindberg
conditions were directly verified in \cite{crf} (c.f., Theorem~8) and their results---without assuming
a bounded third moment---apply to our setting as well. More recently, triangular array CLTs specific to
$U$-statistics were developed in \cite{diciccio2020clt}, and their conditions are satisfied
for our case as well.
\end{remark}

The asymptotic normality of the random forest estimator $\RF$
can be related to the asymptotic normality of $\mathring \RF$ via
\begin{equation}
V^{-1/2}(\RF - \mu) =
V^{-1/2}(\RF - \mathring \RF) +
V^{-1/2}(\mathring \RF - \mu).
\end{equation}
Since the second summand on the RHS is asymptotically normal,
by Slutsky's Theorem, $V^{-1/2}(RF-\mu)$ is asymptotically normal
once we establish the
convergence $V^{-1/2}(\RF - \mathring \RF) \Pto 0$. The strategy is
to show that $e = V^{-1/2}(\RF - \mathring \RF)$ converges in squared mean.
We may develop its squared norm via
\begin{equation}
\label{eq:error squared}
  \begin{split}
\E (e^{\intercal} e) &=
\E(\RF - \mathring \RF)^{\intercal} V^{-1} (\RF - \mathring \RF)
= \E \tr V^{-1}(\RF - \mathring \RF)(\RF - \mathring \RF)^{\intercal} \\
&= \tr V^{-1} \E(\RF - \mathring \RF)(\RF - \mathring \RF)^{\intercal}
= \tr V^{-1/2} \Var(\RF - \mathring \RF)V^{-1/2},
  \end{split}
\end{equation}
where we used the identity $\tr(ABC) = \tr(BCA)$ for conforming matrices
$A$, $B$, and $C$. That trace on the extreme RHS goes to zero
is the natural multivariate generalization of the
familiar condition
\begin{equation}
\frac{\Var(f - \mathring f)}{\Var \mathring f} \to 0
\end{equation}
for univariate $U$-statistics \cite{vdv}.

In the univariate setting, the previous
condition is checked by considering higher order decompositions
proceeds by considering higher order decompositions of the statistic $f$;
this approach is also valid in the multivariate setting, as shown below. As we will
see, the more substantive
difficulty is that the dimension of the kernel of the $U$-statistic---namely, the subsample
size $s$ of each tree---grows with the sample size.

By Proposition~\ref{prop:hoeffding} below, we may expand $\RF - \mathring \RF$
according to a H\"oeffding decomposition taken
\emph{respect to the matrix} $V^{-1}$,
\begin{equation}
\RF - \mathring \RF = \frac{1}{\binom{n}{s}}
\biggl[
\sum_{i<j} \binom{n-2}{s-2} (T^{(2)}(Z_i, Z_j) - \mu) +
\sum_{i < j < k} \binom{n-3}{s-3} (T^{(3)}(Z_i, Z_j, Z_k) - \mu) +
\cdots
 \biggr],
\end{equation}
where $T^{(2)}$, $T^{(3)}$, etc.\ are the second and third order projections
of $T$ which obey the normal equations\footnote{
These normal equations are the main contents of Proposition~\ref{prop:hoeffding}.}
\begin{equation}
\label{eq:1}
\E[ (T^{(k)} - \mu)^{\intercal} V^{-1} (T^{(k')} - \mu)] = 0,
\qquad \text{for $k \neq k'$.}
\end{equation}
Of course, the higher order terms $T^{(k)}$, being projections of $T$,
also satisfy
\begin{equation}
\label{eq:2}
\E[(T^{(k)} - \mu)^{\intercal} V^{-1} (T^{(k)} - \mu)] \leq
\E[(T - \mu)^{\intercal} V^{-1} (T - \mu)].
\end{equation}

These two relationships, used with \eqref{eq:Vvariance} and \eqref{eq:error squared},
imply that
\begin{equation}
\label{eq:3}
\E(e^{\intercal}e) \leq \frac{s}{n} \tr(\Var \mathring T^{-1} \Var T).
\end{equation}
The remainder of this section centers around proving
that the quantity on the RHS converge to zero. For comparison,
a central result of \cite{crf} (using our notation) is the bound
on the \emph{diagonal elements} of $\Var \mathring T^{-1}$
and $\Var T$. Specifically, the authors obtain
\begin{equation}
\label{eq:4}
\frac{(\Var T)_{kk}}{(\Var \mathring T)_{kk}} \leq c (\log s)^p,
\qquad \text{for each $k = 1, \dots, q$},
\end{equation}
for some constant $c$. As we will see in the next section, the required bound on the trace
will follow by developing bounds on the \emph{off-diagonal} elements
of $\Var \mathring T$, i.e., bounds on the covariance between random forest estimates
at different test points (see discussion following Proposition~\ref{prop:trace-bound}).

\begin{proposition}[H\"oeffding Decomposition for Multivariate $U$-statistics]
  \label{prop:hoeffding}
Fix a positive definite matrix $M$.
Let $f(x_1, \dots, x_n) \in \R^p$ be a vector-valued function
that is symmetric in its arguments and let $X_1, \dots, X_n$
be a random sample such that $f(X_1, \dots, X_n)$ has finite variance. Then
there exists functions $f_1, f_2, \dots, f_n$ such that
\begin{equation}
\label{eq:5}
f(X_1, \dots, X_n) = \E(f) +  \sum_{i=1}^n f_1(X_1)
+ \sum_{i < j} f_2(X_i, X_j) + \dots + f_n(X_1, \dots, X_n)
\end{equation}
where $f_k$ is a function of $k$ arguments, such that
\begin{equation}
\label{eq:6}
\E f_k(X_1, \dots, X_k) = 0 \quad \text{and}\quad
\E [f_k(X_1, \dots, X_k)^{\intercal} M f_{\ell}(X_1, \dots, X_l)]  = 0.
\end{equation}
\end{proposition}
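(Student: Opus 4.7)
The plan is to construct the $f_k$ componentwise via the classical Möbius-inversion formula, verify the decomposition identity and the mean-zero condition by a combinatorial argument, and then reduce the matrix-weighted orthogonality to a coordinatewise conditional-expectation property, so that the positive definiteness of $M$ (and even its values) plays no real role. Specifically, using the symmetry of $f$ and the IID assumption, the conditional expectations
\[
g_j(x_1, \dots, x_j) \coloneqq \E[f(x_1, \dots, x_j, X_{j+1}, \dots, X_n)]
\]
are well-defined symmetric functions depending only on the multiset of inputs. I would then define, applied to each component of $f$ separately,
\[
f_k(x_1, \dots, x_k) = \sum_{S \subseteq \{1, \dots, k\}} (-1)^{k-|S|} g_{|S|}(x_S).
\]

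Next I would verify \eqref{eq:5}: expand the right-hand side and collect, for each subset $S \subseteq \{1, \dots, n\}$, the coefficient of $g_S(X_S)$. Standard inclusion–exclusion shows all coefficients cancel except the one on $g_n = f$, giving $f(X_1, \dots, X_n)$ back. The first half of \eqref{eq:6}, $\E f_k = 0$ for $k \geq 1$, is immediate because $\E g_j = \E f$ is constant in $j$, so the alternating binomial sum $\sum_{j=0}^{k}(-1)^{k-j}\binom{k}{j} = 0$ eliminates the $\E f$ contributions.

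The substantive step is the following conditional-expectation property, shown coordinatewise: for every $k$-subset $S$, every $R$ with $R \not\supseteq S$, and every coordinate $a$,
\[
\E[f_k^{(a)}(X_S) \mid X_R] = 0.
\]
This follows by integrating out $X_{S \setminus R}$ in each term of the Möbius sum: $\E[g_{|T|}(X_T) \mid X_R] = g_{|T \cap R|}(X_{T \cap R})$, and grouping the sum by the value $T' = T \cap R$ leaves an alternating sum over supersets of $T'$ inside the nonempty set $S \setminus R$, which vanishes.

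Given this, the second half of \eqref{eq:6} is almost free. Writing
\[
\E[f_k(X_1, \dots, X_k)^{\intercal} M f_\ell(X_1, \dots, X_\ell)] = \sum_{a,b} M_{ab}\, \E[f_k^{(a)}(X_1, \dots, X_k)\, f_\ell^{(b)}(X_1, \dots, X_\ell)],
\]
and taking (WLOG) $k < \ell$, I condition on $(X_1, \dots, X_k)$ inside each scalar expectation: with $R = \{1, \dots, k\}$ and $S = \{1, \dots, \ell\}$ we have $R \not\supseteq S$, so the key property forces $\E[f_\ell^{(b)} \mid X_R] = 0$ and every cross term vanishes. The main obstacle, such as it is, lies in the combinatorial bookkeeping of step three; the vector and matrix aspects are cosmetic, since the orthogonality reduces entirely to scalar coordinatewise orthogonality and thus holds for any $M$, positive definite or not.
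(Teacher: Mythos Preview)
Your proposal is correct. It differs from the paper's argument in route rather than substance: the paper equips the space of square-integrable $\R^q$-valued functions with the inner product $\langle X,Y\rangle=\E(X^{\intercal}MY)$ (which is where positive definiteness enters), defines for each subset $A\subseteq\{1,\dots,n\}$ the subspace $H_A$ of functions depending only on $(X_i)_{i\in A}$, and invokes the standard Hilbert-space fact that these subspaces decompose orthogonally. Your approach instead builds the components explicitly via M\"obius inversion and checks orthogonality by the degeneracy property $\E[f_k(X_S)\mid X_R]=0$ for $R\not\supseteq S$, applied coordinatewise. The paper's argument is shorter but leans on projection machinery and genuinely uses $M\succ 0$; your argument is more hands-on, and your observation that the orthogonality in \eqref{eq:6} actually holds for \emph{any} matrix $M$ is a small strengthening over what the paper's inner-product route delivers.
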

\begin{proof}
 (All proofs may be found in the Appendix.)
\end{proof}

\subsection{Covariance Bounds}
The aim of this section is to establish asymptotic bounds
on the off-diagonal elements of the the covariance matrix
$\Var \mathring T$. We shall show that when the tree estimator employs
splitting algorithms satisfying suitable \emph{stability conditions},
we have the asymptotic behavior
\begin{equation}
\label{eq:covariance bounds}
(\Var \mathring T)_{k, l} = o(s^{-\epsilon})
\text{ for all $1 \leq k \neq l \leq q$ and some $\epsilon > 0$}.
\end{equation}
Before proceeding, we first show that that this bound, coupled with
control on the diagonal terms~\eqref{eq:4}, suffices to establish
the trace bound in~\eqref{eq:3} vanish.
\begin{proposition}\label{prop:trace-bound}
The entries of $\Var T$ are bounded and its diagonal entries are bounded away from zero. 
Furthermore, when $\Var \mathring T$ satisfies the condition
in~\eqref{eq:covariance bounds},
\begin{equation}
\label{eq:7}
\frac{s}{n} \tr(\Var \mathring T^{-1} \Var T) \to 0.
\end{equation}
\end{proposition}
\begin{remark}
  The first part of the Proposition,
concerning the entries of $\Var T$, is a
consequence of our $(\alpha, k)$-regularity assumption
and distribution assumptions on $\{ Z_i \}$.
As discussed in the assumption section, since the number of observations
in leaf nodes are bounded above, the (pointwise) variance of the tree estimator
at $x$ is bounded above by (a constant times) $\Var(Y \mid X = x)$,
and we assumed that the latter function is bounded away from zero.
That the off-diagonal entries are bounded is a trivial consequence of
the fact that $\E(Y^2 \mid X = x)$ is Lipschitz and thus bounded.
The techniques we present
to bound $\Var \mathring T$ could also be used to bound 
$\Var T$; it is in fact true that $(\Var T)_{k, l} \to 0$ 
for $k \neq l$, though we will not pursue this further in this paper.
\end{remark}

Proposition~\ref{prop:trace-bound} establishes $\mathring T$
as the central object of study. Recall that
$T$ is the tree estimator while $\mathring T$
is its Hajek projection; in other words,
$\Var \mathring T$ is \emph{not} covariance matrix of tree estimates.
However, our result will demonstrate the the asymptotic normality $V^{-1}(\RF - \mu)$,
where $V$, the variance of the Hajek projection $\mathring\RF$, is given in
terms of $\Var \mathring T$ (c.f.,~\eqref{eq:Vvariance}). Therefore,
(a rescaled version of) $\mathring T$ is precisely the object 
needed to conduct inference on the random forest. In particular,
combining \eqref{eq:4} and \eqref{eq:covariance bounds} yields the
fact that $\Var \mathring T$---and hence the asymptotic variance of $\RF$---is diagonally
dominant (i.e., tending to a diagonal matrix in the limit).

We may always relabel indices so that the tree is grown on the observations $Z_1, \dots, Z_s$.
To establish the bound \ref{eq:7}, start with the definition
\begin{equation}
\label{eq:8}
\mathring T - \mu = \sum_{i=1}^s \E(T \mid Z_i)
\quad \text{so that}\quad
\Var \mathring T = s \Var(\E(T \mid Z_1))
\text{ due to independence}.
\end{equation}
To develop the term on the RHS, use the orthogonality
condition for conditional expectation
\begin{equation}
\label{eq:9}
\Var \E(T \mid Z_1) = \Var[\E(T \mid Z_1) - \E(T \mid X_1)] +
\Var[\E(T \mid X_1)].
\end{equation}
Since the tree algorithm is honest, the difference
$\E(T \mid Z_1) - \E(T \mid X_1)$ simplifies, so that
for each $1 \leq k \leq q$,
\begin{equation}
\label{eq:10}
\E(T_k \mid Z_1) - \E(T_k \mid X_1) = \E(I_k \mid X_1) (Y_1 - \E(Y_1 \mid I_k = 1, X_1)),
\end{equation}
where $T_k$ is the tree estimate at $x_k$, and $I_k$ the indicator for whether
$X_1$ and $x_k$ belong to the same terminal node.
Therefore, off-diagonal entry at $(k, l)$ of $\Var[\E(T \mid Z_1) - \E(T \mid X_1)]$
is equal to
\begin{equation}
\label{eq:11}
\E[\E(I_k \mid X_1) \E(I_l \mid X_1)
(Y_1 - \E(Y_1 \mid X_1, I_k = 1)
(Y_1 - \E(Y_1 \mid X_1, I_l = 1)].
\end{equation}

If we expand the terms in the integrand, every term will
have the shape
\begin{equation}
\E(I_k \mid X_1)\E(I_l \mid X_1) \cdot p(Y_1, \E(Y_1 \mid X_1, I_k=1), \E(Y_1 \mid X_1, I_l=1))
\end{equation}
for some multinomial $p$ of degree at most two. Since we have assumed that
$\E(Y^2 \mid X = x)$ and $\E(Y^2 \mid X = x)$ are continuous and hence bounded,
$\E(p \mid X_1 = x)$ is also bounded. Using the Law of Iterated Expectations
to evaluate \eqref{eq:11} then shows that it is bounded by a constant times
\begin{equation}
\E[\E(I_k \mid X_1) \E(I_{j} \mid X_1)].
\end{equation}
\begin{remark}
A direct application of the Cauchy-Schwarz inequality, using only the fact that
$\E(Y \mid X=x)$ is bounded, would yield the weaker bound
\begin{equation}
\sqrt{\E[\E(I_k \mid X_1)^2 \E(I_{j} \mid X_1)^2]}
\leq \sqrt{\E[\E(I_k \mid X_1) \E(I_{j} \mid X_1)]}
\end{equation}
up to a multiplicative constant. 
\end{remark}

Recall that $I_k$ and $I_l$ are indicator variables for whether $X_1$
belong to the same hypercube as $x_k$ and $x_l$, respectively. Therefore,
$\E(I_k \mid X_1)$ is the probability that the first observation is used
for the prediction at $x_k$, and likewise for $\E(I_l \mid X_1)$. Intuitively,
this only happens when $X_1$ is near $x_k$ (respectively, $x_l$):
since $x_k \neq x_l$, $X_1$ cannot be near to both, meaning that the
product $\E(I_k \mid X_1)\E(I_l \mid X_1)$ is small.
\begin{proposition}
\label{prop:terminal-node-probability}
For two points $x$ and $\bar x \in \X = [0,1]^p$, define
\begin{equation}
\label{eq:M}
M(x, \bar x) = \E[\E(I \mid X_1) \E(\bar I \mid X_1)],
\end{equation}
where $I$ and $\bar I$ are indicators for $X_1$ belonging to the same terminal
node as $x$ and $\bar x$, respectively. If $\delta > 1/2$ and $x \neq \bar x$,
\begin{equation}
\label{eq:12}
M(x, \bar x) = o(s^{-(1+\epsilon)}) \quad \text{for some $\epsilon > 0$}.
\end{equation}
\end{proposition}
\begin{remark}
It is instructive to consider the bound in the preceding display 
versus $M(x,x)$. It is clear from the definition that $M(x, x) \geq M(x, \bar x)$
for all $\bar x$.
In addition, $M(x, x) = \E(\E(I \mid X_1)^2) \leq \E(\E(I \mid X_1)) = \E(I)$.
By symmetry, $\E I = 1/s$ (up to constant), as the terminal node at $x$ has a bounded
number of observations. Therefore, all that the Proposition ensures is that when $x \neq \bar x$,
the quantity $M(x, \bar x)$ is smaller than the ``trivial'' bound $1/s$.
\end{remark}
This proposition shows that the contribution of
$\Var[\E(T \mid Z_1) - \E(T \mid X_1)]$ to the cross covariances
of $\Var \E(T \mid Z_1)$ is small, in particular
smaller than the required bound $(\log^p s \cdot s)^{-1}$. 
The requirement that $\delta > 1/2$, while needed
for the proof to go through, is almost certainly not needed in practice. The reason
is that our proof uses $\delta > 1/2$ to derive a \emph{uniform} bound on the quantity
\begin{equation}
\E(I_k \mid X_1) \E(I_l \mid X_1),
\end{equation}
while proposition demands a bound on its expectation. Indeed, in the
extreme case $x = 0$ and $\bar x = (1, \dots, 1)^{\intercal}$, it is
easy to see that the expectation satisfies the required bound even
when $\delta \leq 1/2$.  

Furthermore, our proof is agnostic to the exact splitting rule used by
the base tree learner and uses only ``random splits'' (c.f., Assumption 2)
in derive the required bounds. With a specific splitting rule (e.g., \eqref{square-split})
and a specific data distribution, the expectation $M(x,\bar x)$ will
be smaller than that predicted by \eqref{eq:12}.
In light of this, an alternative to our cyclic splitting assumption
is to assume the \emph{high level} condition
that the splitting algorithm and data generating process
confer the bound
\begin{equation}
M(x, \bar x) = o\biggl( \frac{1}{\log^p s \cdot s} \biggr).
\end{equation}

\subsubsection{Bounding $\Var \E(T \mid X_1)$}
We turn next to bound the off-diagonal terms in $\Var[\E(T \mid X_1)]$.
As in the statement of Proposition~\ref{prop:terminal-node-probability},
it will be convenient to slightly change notations. We fix $1 \leq k \neq l \leq q$,
and use the notation $x \mapsto x_k$, $\bar x \mapsto x_l$, with $x_1$
denoting the value of $X_1$. The goal of this section
is to establish the bound
\begin{equation}
\label{eq:covariance-bound}
\Var[\E(T \mid X_1)]_{kl} = \E(\E(T \mid X_1 = x_1) - \mu)
(\E(\bar T \mid X_1 = x_1) - \bar \mu))
= o\biggl( \frac{\log^ps}{s} \biggr).
\end{equation}
where $T$ and $\bar T$ are the tree estimates at $x$ and $\bar x$,
and $\mu$ and $\bar \mu$ are the (unconditional) expectations
of $T$ and $\bar T$. (Note that we have also changed the notation of $T$ slightly;
before $T$ was the $q$-dimensional vector of the estimate at all points; for this section,
it is the pointwise estimate at $x = x_k$.)

The quantity $\E(T \mid X_1 = x_1) - \mu = \E(T \mid X_1 = x_1) - \E(T)$
measures the degree of ``information'' that the location of a single observation
$X_1$ carries for the output of the tree at $x$. Intuitively, when~$x_1$ is near~$x$,
the effect of $x_1$ on the leaf node containing $x$ is more pronounced and we expect
$\E(T \mid X_1 = x_1) - \mu \approx \E(Y \mid X_1 = x_1) - \mu$. Conversely, when $x_1$
is far from $x$, then its effect in determining the location of the leaf node
containing $x$ diminishes, and $\E(T \mid X_1 = x_1) - \mu \approx 0$.

The key in making the above intuition precise is to keep track of the
$X_1 = x_1$ leaves the intermediate partition containing $x$,
where ``intermediate partition'' refers to the nodes created during the splitting
process that are not necessarily terminal. We will see that once $X_1$ and
$x$ are separated, its effect on the prediction decreases.

Towards this end, fix $x$ and let $\Pi$ denote the terminal node containing $x$;
$\Pi$ is a random subset of $\X$ created by axis aligned splits.
By Assumption 4, the set of potential splits does not depend on 
the sample (in particular, it does not depend on $X_1$). Moreover, splitting ceases
after no more than $s$ times, regardless of the subsample $X_1, \dots, X_s$, as each split
reduces the number of observations in its two children nodes by at least one. Therefore,
$\Pi$ takes on only finitely many possible values, and we may write
\begin{equation}
\label{eq:tree-formula}
\E(T) = \sum_{\pi} \P(\Pi = \pi) \mu_{\pi}
\quad \text{and}\quad
\E(T \mid X_1 = x_1) = \sum_{\pi} \P(\Pi = \pi \mid X_1 = x) \mu_{\pi}'
\end{equation}
where $\mu_{\pi} = \E(T \mid \Pi = \pi)$ and
$\mu_{\pi}' = \E(T \mid \Pi = \pi, X_1 = x)$.

The hyperrectangle $\Pi$ is determined by the recursive splitting procedure used
to grow the tree, and there is a natural correspondence between \eqref{eq:tree-formula}
and a certain ``expectation'' taken over the \emph{directed acyclic graph} (DAG)
defined in the following way.
Let $[0, 1]^p$ be the root of the DAG; for every potential split at $[0,1]^p$,
there is a directed edge to a new vertex, where the vertex is the 
hyperrectangle that contains $x$. If the node represented by a vertex
is one of the possible values of $\Pi$, then that vertex is a leaf in the DAG and has no outgoing edges;
other vertices carry an outgoing edge for each potential split at that node,
with each edge going to another vertex which is again a hyperrectangle containing $x$.

The previous definition determines the DAG recursively: each vertex in the DAG is a node containing $x$,
with terminal vertices corresponding to terminal nodes. To each terminal vertex $v$, we associate
the value $f(v) \coloneqq \mu_{\pi}$ as in \eqref{eq:tree-formula}.
In addition, each edge
$e = (v \to w)$ corresponds to a split at a node $v$ producing a halfspace $w$ of $v$;
associate with this edge a ``transition probability''
\begin{equation}
\label{eq:24}
p(e) \coloneqq \P(\text{$s$ is chosen at $v$} \mid \text{current node is $v$})
\eqqcolon \P(w \mid v).
\end{equation}
Given the transition probabilities, the value $f$ may be extended
to each vertex $v$ recursively via the formula
\begin{equation}
\label{eq:25}
f(v) \coloneqq \sum_{e: v \to w}  \P(w \mid v) f(w).
\end{equation}
We refer to $f$ as the continuation value at $v$, and by construction
we have
\begin{equation}
\label{eq:22}
\E(T) = f(\text{``root''}) = f([0,1]^p).
\end{equation}

Alternatively, if we had assigned the values $f'(v) = \mu'_v$
to each terminal vertex and used the transition probabilities
\begin{equation}
\label{eq:27}
p'(e) = \P(\text{$s$ is chosen at $v$} \mid \text{current node is $v$}, X_1 = x_1)
= \P'(w \mid v),
\end{equation}
then we recover $\E(T \mid X_1 = x_1) = f'([0,1]^p)$ after extending $f'$ in the same way
as $f$. In other words, bounding $\E(T \mid X_1 = x_1) - \E(T)$ requires bounding the difference
between the two types of continuation values.

We will need to assume that $p'(e) \approx p(e)$; that is, conditioning on a \emph{single}
observation will not affect the probability that a particular split is chosen (i.e., that a particular
split is optimal at its node). This is a natural assumption in that the optimal
split is computed using all the observations in a particular node, so that conditioning
on a single observation should have relatively little effect. This will depend on the specifics
on the splitting algorithm used to a construct a tree, and rules which satisfying the following
assumption to be ``stable splitting rules.''
\begin{splitting-stability}
For any node $v$, the total variation distance between the distributions
$\{ p(e) \}_{e: v \to w}$ and $\{ p'(e) \}_{e: v \to w}$ is bounded by
the volume of $v$. Specifically, there exists some $\epsilon>0$
such that for all $v$,
\begin{equation}
\label{eq:29}
\operatorname{TV}(p, p') \leq \biggl( \frac{1}{s|v|} \biggr)^{1+\epsilon}\qquad
\text{(up to a constant).}
\end{equation}
Here, $|v|$ denotes the volume of the hyperrectangle at $v$, i.e.,
\begin{equation}
\label{eq:30}
|v| = \biggl| \prod_{j=1}^p (a_j, b_j) \biggr| = \prod_{j=1}^p |b_j - a_j|.
\end{equation}
\end{splitting-stability}
\begin{remark}
Since $p$ and $p'$ are discrete probability distributions: thus, if 
$p$ and $p'$ are written as vectors of probability masses, then the total
variation distance is the $L_1$ norm between the two vectors.
\end{remark}
Since the distribution of $X$ has a density that is bounded above and below,
a simple H\"oeffding bound shows that the number of sample points in $v$ is
bounded above and below by $s|v|$, with the constants adjusted so that 
the failure probability is less than\footnote{For example,
the probability that a binomial random variable $B(n,p)$
deviates from $np$ by more than $\sqrt{n \log n}$ is less than $C/s^2$
for some constant $C$.}
$1/s^2$. Since this is smaller than the required bound $(\log^p s \cdot s)^{-1}$,
we may interpret $s|v|$ in \eqref{eq:29} to be the number of samples in $|v|$
without loss of generality.
Relatedly, \cite{Wager2015AdaptiveCO}'s Lemma 12 (see also the proof
of Lemma~2 in \cite{crf}) extends to this fact to be uniform across nodes.

The stability assumption places a restriction on procedure used to
select optimal splits: namely, if the decision is made on the basis of
$m$ points, then conditioning on any one of the points changes the
optimal split with probability bounded by $m^{-(1+\epsilon)}$. In
practice, most splitting procedures satisfy a stronger bound. A set of
sufficient conditions is given in the following proposition.
\begin{proposition}
\label{prop:split-stability-sufficient}
Assume that the optimal split at a node $v$ is chosen based on the quantities
\begin{equation}
f_1(\mu_1, \dots, \mu_Q), \dots, f_P(\mu_1, \dots, \mu_Q)
\end{equation}
for some $Q \geq 1$, where
$\mu_1, \dots, \mu_Q$ are the sample averages of the points being split
\begin{equation}
\mu_k  = \frac{1}{n_v} \sum_{i: X_i \in v} m_k(X_i)
\end{equation}
where the sum runs over points in $v$, and $n_v$ denote the number of
these points. 

Specifically, suppose optimal split is decided based on which $f_i$ achieves the largest value,
i.e., the value $\argmax_i f_i(\mu)$. If
$f_1, \dots, f_P$ are Lipschitz, and the functions $m_1, \dots, m_Q$ are such
that $m_k(X)$ is 1-subExponential, then the splitting stability assumption is satisfied.
\end{proposition}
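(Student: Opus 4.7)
My plan is to compare the two split distributions at $v$ under a coupling of the data regimes, and to show that the $\argmax$ rule is stable unless a near-tie among the values $f_i(\mu)$ occurs. Let $n_v$ denote the number of sample points in $v$; by the distributional assumption on $X$, $n_v \geq c\,s|v|$ except on an exponentially small event, so we may condition on $n_v \asymp s|v|$. Couple the unconditional distribution of the sample in $v$ with the one conditioned on $X_1 = x_1$ by using shared $X_2, \dots, X_{n_v}$. Under this coupling the two sample-average vectors satisfy $\mu - \mu' = (m(X_1) - m(x_1))/n_v$, so by $1$-subexponentiality of each $m_k$, one has $\|\mu - \mu'\|_\infty \leq C\log(n_v)/n_v$ with probability at least $1 - n_v^{-K}$ for any fixed $K$.

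\textbf{Stability of the argmax.} Since each $f_i$ is Lipschitz with some constant $L$, the rule $i^\star(\mu) = \argmax_i f_i(\mu)$ is preserved under the coupling whenever $|f_i(\mu') - f_j(\mu')| > 2L\|\mu - \mu'\|_\infty$ for every pair $i \neq j$. Combined with the tail bound above, this gives
\begin{equation*}
\TV(p, p') \leq \P\bigl(i^\star(\mu) \neq i^\star(\mu')\bigr) \leq \sum_{i < j} \P\Bigl(|f_i(\mu') - f_j(\mu')| \leq \tfrac{2LC\log n_v}{n_v}\Bigr) + n_v^{-K}.
\end{equation*}
The remaining task is to bound each near-tie probability in the sum.

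\textbf{Near-tie probabilities and the main obstacle.} For each pair $(i,j)$, set $g_{ij} = f_i - f_j$ and $\mu_0 = \E[m(X)\mid X\in v]$. Bernstein-type concentration for the sample average yields $\|\mu' - \mu_0\|_\infty \leq C'\log(n_v)/\sqrt{n_v}$ with overwhelming probability. If $g_{ij}(\mu_0)\neq 0$, Lipschitzness forces $|g_{ij}(\mu')|$ to stay bounded away from zero on an event of exponentially high probability, which easily beats $(s|v|)^{-(1+\delta)}$. The genuine obstacle is the degenerate case $g_{ij}(\mu_0)=0$: the best general bound from a Berry--Esseen plus anti-concentration argument for the Lipschitz image $g_{ij}(\mu')$ is only $\P(|g_{ij}(\mu')|\leq\epsilon) = O(\epsilon\sqrt{n_v})$, which translates to a TV rate of $O(\log(n_v)/\sqrt{n_v})$ rather than the required $(s|v|)^{-(1+\delta)}$. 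I would close this gap by imposing a mild non-degeneracy condition at $\mu_0$ that rules out exact population-level ties (reducing the proof to the concentration tail), or alternatively by exploiting the near-symmetry of $p$ and $p'$ at a tie (where both regimes split almost evenly between $i$ and $j$, so the signed difference entering the TV largely cancels); quantifying this cancellation is the technical heart of the argument.
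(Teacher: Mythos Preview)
Your coupling approach and the reduction to ``argmax stability under an $O(1/n_v)$ perturbation of the sample means'' is exactly the mechanism the paper uses. You are in fact more careful than the paper on one point: the paper simply asserts that because $f_i$ and $f_j$ concentrate near their population limits, one has $f_i - f_j > 1/s$ with probability $1 - O(e^{-cs})$ ``whenever $f_i(\E m) > f_j(\E m)$,'' and never discusses the degenerate case $f_i(\E m) = f_j(\E m)$. Your diagnosis that anti-concentration alone gives only an $O(n_v^{-1/2})$ rate in that case is correct; the paper sweeps this under the rug (and even frames the proposition as a ``plausibility argument'' in the surrounding text). So your proposed fix---assuming a population-level non-degeneracy condition---is in fact what the paper's proof silently relies on, and under that assumption the paper obtains an exponential bound $e^{-c\,n_v}$, not merely the polynomial one required.

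There is, however, a structural piece you have not addressed and that occupies most of the paper's proof. At a node $v$ deeper than the root, the collection $\{X_i : X_i \in v\}$ is \emph{not} an IID sample from the restriction of $X$ to $v$: which points end up in $v$ depends on the earlier data-dependent splits, and those splits differ under the two regimes $p$ and $p'$. Your coupling ``use shared $X_2,\dots,X_{n_v}$'' is therefore only valid at the root. The paper handles this by induction on depth: having bounded $\P(S\neq \tau \mid S'=\tau) \le e^{-cs}$ at the root, it shows that the laws of $(X_2,\dots,X_s)$ conditional on $S=\tau$ versus $S'=\tau$ differ by at most $e^{-cs}$ in total variation, then re-runs the coupling argument at the child node while paying this TV cost via the triangle inequality. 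Iterating to depth $l \le O(\log(s|v|))$ accumulates a total error of order $\log(s|v|)\cdot e^{-c\,s|v|}$, which still beats $(s|v|)^{-(1+\delta)}$. Without this recursive bookkeeping your argument only establishes stability at the root; you should add the induction over levels and the control of the conditional law of the surviving points.
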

\begin{remark}
Since $X_i$ are bounded, the requirement that $m_k(X)$ is subExponential
allows the use of \eqref{square-split} to compute the optimal split.
\end{remark}
In general, the conditions in Proposition~\ref{prop:split-stability-sufficient}
are sufficient to guarantee an exponential bound instead of a polynomial one 
as in~\eqref{eq:29}. Thus, Proposition~\ref{prop:split-stability-sufficient}
should be viewed as simply as providing a plausibility argument
that stable splitting rules are commonly encountered in practice.

The next proposition shows that bounds on splitting probabilities
automatically imply a related bound on the continuation values.
\begin{proposition}
\label{prop:value-bounds}
Suppose the splitting probabilities satisfy a generic bound $\Delta(\cdot)$ in that
\begin{equation}
\TV(p, p') \leq \frac{\Delta(s|v|)}{\log s} \quad \text{at each node $v$}.
\end{equation}
For example, $\Delta(z) = z^{-(1+\epsilon)}$. 
Then for any node $v$ containing $x$ but not $x_1$,
\begin{equation}
|f(v) - f'(v)| \leq C \Delta(s|v|)
\end{equation}
for some constant $C$ not depending on $v$.
\end{proposition}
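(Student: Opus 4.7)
The plan is to unroll the one-step recursion defining the continuation values, moving from the node $v$ down through its descendants in the DAG. The crucial point is that since $v$ does not contain $x_1$, no descendant of $v$ contains $x_1$ either, so the same structural hypothesis is preserved along every downward path. I would begin with a base case at the terminal nodes: if $v$ is terminal with $x_1 \notin v$, honesty combined with the independence of $Z_1$ from $(Z_2, \ldots, Z_s)$ means $Y_1$ never enters the average defining either $\mu_v$ or $\mu'_v$, so $\mu_v = \mu'_v$ (or differs by a term negligible at the scale of $\Delta(sk)$).

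The inductive step is driven by the identity
\begin{equation*}
f(v) - f'(v) = \sum_{e: v \to w} \bigl(p(e) - p'(e)\bigr)\bigl(f(w) - m(x)\bigr) + \sum_{e: v \to w} p'(e)\bigl(f(w) - f'(w)\bigr),
\end{equation*}
which is valid because $m(x) = \E(Y \mid X = x)$ is a constant absorbed by $\sum_e(p - p') = 0$. For the first sum, since every child $w$ reached contains $x$, the Lipschitz assumption combined with honesty gives $|f(w) - m(x)| \leq L \cdot \text{diam}(w) \leq L \cdot \text{diam}(v)$, so the splitting-stability hypothesis controls the first sum by a constant multiple of $\text{diam}(v) \log(s) \Delta(s|v|)$. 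The second sum is then handled by applying the same decomposition to each child.

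Iterating this recursion telescopes along the random path $v_0 = v, v_1, v_2, \ldots, v_L = \pi$ traced out under the $p'$-kernel, giving a bound of the form
\begin{equation*}
|f(v) - f'(v)| \lesssim \log(s) \sum_{t \geq 0} \E_{p'}\bigl[\text{diam}(v_t) \Delta(s|v_t|)\bigr] + \E_{p'}|\mu_\pi - \mu'_\pi|.
\end{equation*}
The terminal correction vanishes by the base case, so the argument reduces to showing the sum is $\leq C \Delta(s|v|)$ for a constant $C$ independent of $v$. This is the main obstacle: $\Delta(s|v_t|)$ grows as $|v_t|$ shrinks along the path, and on the face of it the contributions from the leaves could dominate. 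The resolution must exploit that the Lipschitz centering supplies a $\text{diam}(v_t)$ factor which decays, and under the Randomized Cyclic Splits assumption every coordinate gets split infinitely often with positive probability, forcing $\text{diam}(v_t) \sim |v_t|^{1/p}$ and a geometric shrinkage of $|v_t|$ along the path. Combined with the fact that the tree depth is $O(\log s)$ (from $(\alpha,k)$-regularity), careful bookkeeping should show that the per-level contributions form a sum whose total is dominated by its root-level term $\Delta(s|v|)$, up to constants that absorb the $\log(s)$ prefactor. Nailing down this final step — balancing the growth of $\Delta$ against the geometric decay of $\text{diam}(v_t)$ along random paths, uniformly in $v$ — is where the real work lies.
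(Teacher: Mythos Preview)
Your downward recursion goes in the wrong direction and will not close. At level $t$ below $v$ you invoke the generic bound $\TV(p,p') \leq \log(s)\Delta(s|v_t|)$ with $|v_t| < |v|$; since $\Delta$ is decreasing these contributions \emph{grow} as you descend. With $\Delta(m) = m^{-(1+\delta)}$ and $|v_t|$ as small as $\alpha^t|v|$, the level-$t$ term is of order $\operatorname{diam}(v_t)\,\alpha^{-t(1+\delta)}\Delta(s|v|)$, and even granting $\operatorname{diam}(v_t) \lesssim (1-\alpha)^{t/p}$ from cyclic splitting, the ratio of successive levels is $(1-\alpha)^{1/p}\alpha^{-(1+\delta)} > 1$ for every $\alpha < 1/2$. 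The sum is dominated by its \emph{leaf} term, not its root term, so the ``careful bookkeeping'' you defer cannot succeed. Your base case is also not correct for the reason you give: even when $x_1 \notin v$, the events $\{\Pi = v\}$ and $\{\Pi' = v,\; X_1 = x_1\}$ impose different conditioning on the remaining points $X_2,\dots,X_s$ that land in $v$, so $\mu_v \neq \mu'_v$ in general. (The paper dispatches terminal $v$ trivially for a different reason: $s|v|$ is of constant order at a leaf, so $\Delta(s|v|)$ is bounded below and any bounded difference automatically satisfies the inequality.)

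The paper's argument avoids downward recursion altogether. Its key observation is that once $x_1 \notin v$, both $f(v)$ and $f'(v)$ are expectations of the \emph{same} bounded functional of $\mathbf{X}_v = \{X_i : X_i \in v\}$: the subsequent splitting and leaf averaging are determined entirely by the points inside $v$ and never see $X_1$. Hence $|f(v) - f'(v)|$ is controlled by the total-variation distance between the law of $\mathbf{X}_v$ given $\Pi = v$ and its law given $(\Pi' = v,\,X_1 = x_1)$. That distance is in turn bounded by the probability that the coupled paths $\Pi$ and $\Pi'$ from the \emph{root} to $v$ ever disagree, i.e.\ by a sum of stability bounds over the at most $\log s$ ancestors of $v$---each of which has volume at least $|v|$, hence $\Delta$-contribution at most $\Delta(s|v|)$. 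The error enters from above $v$, not from below; once you see this, no telescoping over descendants is needed.
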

The splitting stability assumption stipulates that $\Delta(z) = z^{-(1+\epsilon)}$,
where the factor $\epsilon$ allows us to ignore the extra logarithm. 
In that case, we may put the bounds on $\TV(p, p')$ and $|f-f'|$
together and establish required bound on $\Var \E(T \mid X_1)$.
\begin{proposition}\label{prop:2-bound}
Suppose that the splitting rule is stable as in \eqref{eq:29}
and that $\delta  > 1-\alpha$.
For $x \neq x_1$,
\begin{equation}
\label{eq:26}
|\E(T  \mid X_1 = x_1) - \E(T)| = o\biggl( \frac{1}{s^{1+\epsilon}} \biggr)
\end{equation}
for some $\epsilon > 0$. In particular, the off-diagonal
entries of $\Var \E(T \mid X_1)$ are $o(s^{-(1+\epsilon)})$
as at least one of $x$ and $\bar x$ is distinct from $x_1$.
\end{proposition}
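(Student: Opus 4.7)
The plan is to iterate the DAG recursion introduced before the statement. Writing $D(v) := f(v) - f'(v)$, at any internal node $v$ we have
\begin{equation*}
D(v) = E(v) + \sum_{e:\, v \to w} p(e) D(w), \qquad E(v) := \sum_e [p(e) - p'(e)] f'(w_e).
\end{equation*}
By the Splitting Stability assumption, $|E(v)| \leq 2\|f'\|_\infty \TV(p, p') \leq C (s|v|)^{-(1+\delta)}$ when $x_1 \in v$. Moreover, if $x_1 \notin v$, the conditional distribution of $\{X_i\}_{i \geq 2}$ lying in $v$ is unaffected by the event $X_1 = x_1$, so $p = p'$ at $v$ and $E(v) = 0$.

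Unrolling the recursion from $v_0 = [0,1]^p$ along the random path $v_0, v_1, \dots, v_K = \Pi_x$ (generated according to $p$) telescopes into
\begin{equation*}
|D(v_0)| \leq \E\Bigl[\sum_{k:\, x_1 \in v_k,\, k < K} |E(v_k)|\Bigr] + 2 \|f\|_\infty \cdot \P(x_1 \in \Pi_x).
\end{equation*}
The terminal term is immediately $o(s^{-(1+\epsilon)})$ by Proposition~\ref{prop:terminal-node-probability}.

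For the internal sum, observe that along any fixed path, $(s|v_k|)^{-(1+\delta)}$ grows geometrically with ratio at least $(1-\alpha)^{-(1+\delta)} > 1$, since $|v_{k+1}| \leq (1-\alpha)|v_k|$ by the predetermined-splits and $\alpha$-regularity assumptions. Thus, if $\tau$ denotes the first depth with $x_1 \notin v_\tau$, the partial sum $\sum_{k=0}^{\tau-1} (s|v_k|)^{-(1+\delta)}$ is dominated up to a constant by its last term $(s|v_{\tau-1}|)^{-(1+\delta)}$. The problem thus reduces to bounding $\E[(s|v_{\tau-1}|)^{-(1+\delta)} \mathbf{1}(\tau \leq K)]$; the complementary event $\tau > K$ is already subsumed into the terminal bound above.

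The main obstacle is this final estimate, for which the hypothesis $\delta > 1-\alpha$ is essential. The argument should parallel the cyclic-splits analysis underlying Proposition~\ref{prop:terminal-node-probability}: at each depth before separation, the randomized cyclic splits give a lower bound on the per-step probability of separating $x$ from $x_1$ in terms of the relevant axis lengths; combining this with the geometric volume decay along the path trades off the growth of $(s|v_k|)^{-(1+\delta)}$ against the survival probability, and the condition $\delta > 1 - \alpha$ tips this trade-off toward the desired polynomial decay $o(s^{-(1+\epsilon)})$. The off-diagonal statement on $\Var\,\E(T \mid X_1)$ then follows immediately from the uniform pointwise bound.
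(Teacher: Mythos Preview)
Your overall strategy---unrolling the DAG recursion and separating the contributions according to whether the current node still contains $x_1$---is the same as the paper's. The gap is in the step where you assert that $p=p'$ (hence $E(v)=0$, hence $D(v)=0$) at every node $v$ with $x_1\notin v$. The transition probabilities $p(e)$ and $p'(e)$ in \eqref{eq:24}--\eqref{eq:27} are defined \emph{conditionally on the splitting process having reached $v$}. Since the earlier splits along that path are data-adaptive and depend on all of $X_1,\dots,X_s$, fixing $X_1=x_1$ perturbs the conditional law of the points lying in $v$ even when $x_1\notin v$: independence of $X_1$ and $\{X_i\}_{i\ge 2}$ does not survive conditioning on an event (the path) that involves $X_1$. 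The same objection applies to your terminal-term bound $2\|f\|_\infty\,\P(x_1\in\Pi_x)$, which tacitly uses $\mu_\pi=\mu_\pi'$ whenever $x_1\notin\pi$. This is precisely the difficulty that Proposition~\ref{prop:value-bounds} is designed to absorb: for any node $v$ containing $x$ but not $x_1$ it gives $|f(v)-f'(v)|\le C\,\Delta(s|v|)$, not zero, at the cost of the extra $\log s$ factor coming from the accumulated path perturbation.

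Once you invoke Proposition~\ref{prop:value-bounds} at each level in place of the claim $E(v)=0$, your recursion collapses to the paper's bound $\log s\sum_{\ell\ge 0}p_\ell\,\Delta(\alpha^\ell s)$ with $p_\ell=\P(L\ge\ell)$; your ``last term before separation'' reduction is then just a repackaging of the same geometric series, and the trade-off $p_\ell\lesssim(1-\delta)^\ell$ versus $\Delta(\alpha^\ell s)\asymp s^{-(1+\delta)}\alpha^{-\ell(1+\delta)}$ is exactly what the hypothesis $\delta>1-\alpha$ settles.
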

Recall that Proposition~\ref{prop:terminal-node-probability} required
that $\delta > 1/2$. Since $\alpha < 1/2$ by definition, the requirement
that $\delta > 1-\alpha$ in Proposition~\ref{prop:2-bound} is more restrictive.
Just like Proposition~\ref{prop:terminal-node-probability},
we argue that this requirement is plausibly looser in 
applications. The reason is that it is used to give the following bound
on hyperrectangles $v$ created after $L$ splits
\begin{equation}
|v| \geq \alpha^L.
\end{equation}
The RHS appears since potential split may reduce the volume of a node by at most $\alpha$:
but only an exponentially small (i.e., $2^{-L}$) proportion of nodes are the result of
taking the smallest possible split $L$ times!
The ``average'' node has volume approximately
$(1/2)^L$, so that $\delta > 1/2$ may be more appropriate. 

\subsection{Wrapping Up}
Combining Propositions~\ref{prop:terminal-node-probability} and \ref{prop:2-bound}
with equations \eqref{eq:8} and \eqref{eq:9} yields the desired bound \eqref{eq:covariance bounds} on the off-diagonal
terms of $\Var \mathring T$ discussed at the beginning of this section.
Therefore, Proposition~\ref{prop:trace-bound} applies, and the joint
normality of the random forest estimator is established.

\section{Heuristics and Simulations}
The previous sections focused on deriving the asymptotic normality result
\begin{equation}
\label{eq:32}
V^{-1/2}(\RF - \mu) \Dto N(0,I),
\qquad \text{where } 
V = \Var \mathring \RF = \frac{s}{n} \Var \mathring T.
\end{equation}
Recall our standing convention that $\RF\in \R^q$ is the random forest estimate
at $x_1, \dots, x_q$ and $\mu$ is its expectation.
According to \eqref{final-prediction}, the target function of the random forest is actually
is $m(x) = \E(Y \mid X = x)$. The results in \cite{crf} show that
$(\RF(x) - m(x))/\sqrt{V} \Dto N(0, 1)$ pointwise for each $x \in \X$; since we have shown
that $V$ is diagonally dominant in that its off-diagonal terms vanish relative to the diagonal, 
the pointwise result carries over to our multivariate setting, and $V^{-1/2}(\RF - m) \Dto N(0, I)$,
where $m = (m(x_1), \dots, m(x_q))$.

Moreover, \cite{crf} proposes a jackknife estimator that can consistently
estimate $\sqrt{V}$ in the scalar case. Our diagonal dominance result implies that
the random forest estimates at $x$ and $\bar x \in [0, 1]^p$ are independent
in the limit $n\to\infty$,
\begin{equation}
\label{eq:33}
\begin{split}
\Var(\RF(x) + \RF(\bar x)) &= \Var(\RF(x)) + \Var(\RF(\bar x)) + 2 \Cov(\RF(x) + \RF(\bar x))
\\&\approx \Var(\RF(x)) + \Var(\RF(\bar x)),
\end{split}
\end{equation}
so that the jackknife estimator for the scalar case
may be fruitfully applied to obtain confidence bands for \emph{functionals}
of the random forest estimates (i.e., expressions involving the estimate at more than one point).

The accuracy of the approximation above above depends on the decay of
the off-diagonal terms in finite samples. In this section, we provide a
``back of the envelope'' bound for the covariance term that may be useful
for practitioners. We stress that the following calculations are (mostly)
heuristics: as we have shown above, the covariance term depends on
quantities such as $M(x, \bar x)$, which is in turn heavily dependent on the
exact mechanics of the underlying splitting algorithm. 
Since our aim is to produce a ``usable'' result, we will dispense with rigorous analysis
in the remainder of this section.

To begin, the proofs of Propositions~\ref{prop:terminal-node-probability} and~\ref{prop:2-bound}
showed that the asymptotic variance $V$ has off-diagonal
terms which are upper bounded by\footnote{This is a very crude upper bound
as we have dropped the quantity $\Delta(\alpha^{\ell} s)$
from the infinite series.}
\begin{equation}
M(x, \bar x) + \log^2s \biggl(\sum_{\ell = 0}^{\infty} p_{\ell}\biggr)
\biggl(\sum_{\ell = 0}^{\infty} \bar p_{\ell}\biggr)
=
M(x, \bar x) + \log^2s \E(L)\E(\bar L)
\end{equation}
where $p_{\ell} = \P(L \geq \ell)$ is the probability that $x$ and
$x_1$ are not separated after $\ell$ splits and likewise for
$\bar p_{\ell} = \P(\bar L \geq \ell)$. That is, $L$ is the number of splits before
which $x$ and $x_1$ belong to the same partition.
If we denote by $I$ (resp.\ $\bar I$) the indicator variable
that $X_1$ is in the terminal node of $x$ (resp.\ $\bar x$), then
the events $\{ I=1 \}$ and $\{L = \log_2 s\}$ are equal, so that
\begin{equation}
\E(I \mid X_1 = x_1) = \P(L = \log s) \leq \frac{\E L}{\log s}.
\end{equation}
Replacing the inequality with an approximation, we have
$\E L = (\log s) \E(I \mid X_1 = x_1)$. All of this shows that
the covariance term is bounded by
\begin{equation}
(\log^4s) \E(I \mid X_1 = x_1) \E(\bar I \mid X_1 = x_1) \approx
(\log^4s) M(x, \bar x).
\end{equation}
\begin{remark}
Taken loosely, this heuristic says that the random forest
estimator $\RF$, considered as a function on the domain $\X$,
is asymptotically Gaussian with covariance process $(\log^4 s) \cdot M(x, \bar x)$.
We stress that this is \emph{not} implied by our theoretical results, as there
we kept the number $q$ of test points fixed.
\end{remark}

Towards a useful heuristic, we will consider a bound on the correlation
instead of the covariance. In our notation, the result of \cite{crf} lower bounds $M(x, x)$
(and $M(\bar x, \bar x)$), while our paper provides an \emph{upper}
bound on $M(x, \bar x)$. Ignoring the logarithmic terms, we have
\begin{equation}
\biggl|\frac{\Cov(\RF(x), \RF(\bar x))}{\sqrt{\Var \RF(x) \cdot \Var \RF(\bar x)}}\biggr|
\approx
\frac{M(x, \bar x)}{\sqrt{M(x, x) M(\bar x, \bar x)}}.
\end{equation}
Recall that $M(x, \bar x) = \E[\E(I \mid X_1) \E(\bar I \mid X_1)]$,
which decays as $\bar x$ moves away from $x$. Using the previous expression
(note that $M(x,x) \approx M(\bar x, \bar x)$ due to symmetry between $x$ and $\bar x$),
we can bound the correlation from purely geometric considerations. Since the integrand
\begin{equation}
\E(I \mid X_1) \E(\bar I \mid X_1)
\end{equation}
decays as $X_1$ moves away from $x$ (and $\bar x$), we may imagine that
its integral
\begin{equation}
\label{eq:13}
M(x, x) = \int_{x_1} \E(I \mid X_1=x_1)^2 dx_1
\end{equation}
has the largest contributions for points $x_1$ near $x$,
say those points in a $L_{\infty}$-box of side lengths $d$ with volume $d^p$, i.e.,
$\{ y \in [0,1]^p : \| x - y \| \leq d/2\}$. 
If we accept this, then the contributions for the integral $M(x, \bar x)$ would come from points
that are with $d/2$ of both $x$ \emph{and} $\bar x$, and to a first degree approximation,
the volume of these points $\{ y \in [0,1]^p : \| x - y \|_{\infty} \leq d/2, \| \bar x - y \| \leq d/2 \}$
is
\begin{equation}
(d - z_1)  \dots (d - z_p) \approx d^p - (z_1 + \dots + z_p)d^{p-1}, \quad \text{where $z_i = |x_j - \bar x_j|$},
\end{equation}
where the approximation is accurate if $|z_i| \ll 1$. Dividing through by $d^p$, the
proportion of the volume of the latter set is $1 - \frac{1}{d}\| x - \bar x \|_1$, which leads to the heuristic
\begin{equation}
\biggl|\frac{\Cov(\RF(x), \RF(\bar x))}{\sqrt{\Var \RF(x) \cdot \Var \RF(\bar x)}}\biggr|
\approx
1 - c \| x - \bar x \|_1, \qquad \text{for some constant $c$}.
\end{equation}
The RHS has the correct scaling when $x = \bar x$, i.e., the correlation equals
one when $\| x - \bar x \|_1 = 0$. To maintain correct scaling at the other extreme
with $\| x - \bar x \|_1 = p$, we should take $c = 1/p$, so that 
\begin{equation}
\biggl|\frac{\Cov(\RF(x), \RF(\bar x))}{\sqrt{\Var \RF(x) \cdot \Var \RF(\bar x)}}\biggr|
=
1 - \frac{1}{p}\sum_{i=1}^p |x_i - \bar x_i|.
\end{equation}

Of course, this heuristic must be incorrect in that it does not depend on $s$;
our theoretical results show that even for non-diametrically opposed points,
the correlation drops to zero as $s \to \infty$. Therefore,
another recommendation is to use
\begin{equation}
\label{linear-heuristic}
\biggl|\frac{\Cov(\RF(x), \RF(\bar x))}{\sqrt{\Var \RF(x) \cdot \Var \RF(\bar x)}}\biggr|
=
\min \biggl(
1 - \frac{s^{\epsilon}}{p}\sum_{i=1}^p |x_i - \bar x_i|, \; 0
\biggr),
\end{equation}
for some $\epsilon > 0$, where dependence $s^{\epsilon}$  comes from considering the decay
of $M(x, \bar x)$ as $\bar x$ moves away from $x$ (c.f.\ the proof Proposition~\ref{prop:terminal-node-probability}).

\subsection{Simulations}
In this section, we discuss the results of simulations calculating the correlation structure.
For our experiment, we set $p = 2$, so that
the covariates $X$ are distributed on the unit square. The distribution of $X$ is chosen to be
``four-modal''
\begin{equation}
\label{eq:16}
X \sim
\begin{cases}
\bar N(\mu_1, I_2)  & \text{with probability $1/4$} \\
\bar N(\mu_2, I_2) & \text{with probability $1/4$}\\
\bar N(\mu_3, I_2) & \text{with probability $1/4$}\\
\bar N(\mu_4, I_2) & \text{with probability $1/4$}
\end{cases} \quad \text{where}\quad
\begin{cases}
\mu_1 = (0.3, 0.3)^{\intercal} \\
\mu_2 = (0.3, 0.7)^{\intercal} \\
\mu_3 = (0.7, 0.3)^{\intercal} \\
\mu_4 = (0.7, 0.7)^{\intercal} \\
\end{cases}
\end{equation}
and $\bar N$ denotes a truncated multivariate Gaussian distribution on the unit square.\footnote{
That is, $\bar N(\mu, \Sigma)$ denotes the conditional distribution of $x \sim N(\mu, \Sigma)$
on the event $x \in [0, 1]^2$.} Thus, $X$ has a bounded density on the unit square, and has four peaks
at $\mu_1, \dots, \mu_4$. The distribution of $Y$ conditional on $X = (x_1, x_2)$ is
\begin{equation}
\label{eq:18}
Y \sim \frac{x_1 + x_2}{2} + \frac{1}{5} N(0, 1).
\end{equation}
The random splitting probability is $\delta = 1/2$, and the regularity parameters
are $\alpha = 0.01$ and $k = 1$, so that the tree is grown to the fullest extent (i.e., terminal
nodes may contain a single observation), with each terminal node lying on the $101 \times 101$
grid of the unit square. For each sample size $n$, five thousand trees are grown, and the
estimates are aggregated to compute the correlation.

Figure~\ref{fig:1} plots the correlation of estimates at $x$ and $\bar x$
as a function of the $L_1$ norm $\|x - \bar x\|_1$. The calculation is performed by
first fixing $x$, then calculating the sample correlation (across five thousand
trees) as $\bar x$ ranges over each cell: the correlation is associated with
the $L_1$ norm $\|x - \bar x\|_1$. This process is then repeated by varying the reference point $x$,
and the correlation at $\| x - \bar x \|_1$ is the average of the correlations observed.
\begin{figure}[t]
  \centering
\includegraphics[width=\textwidth]{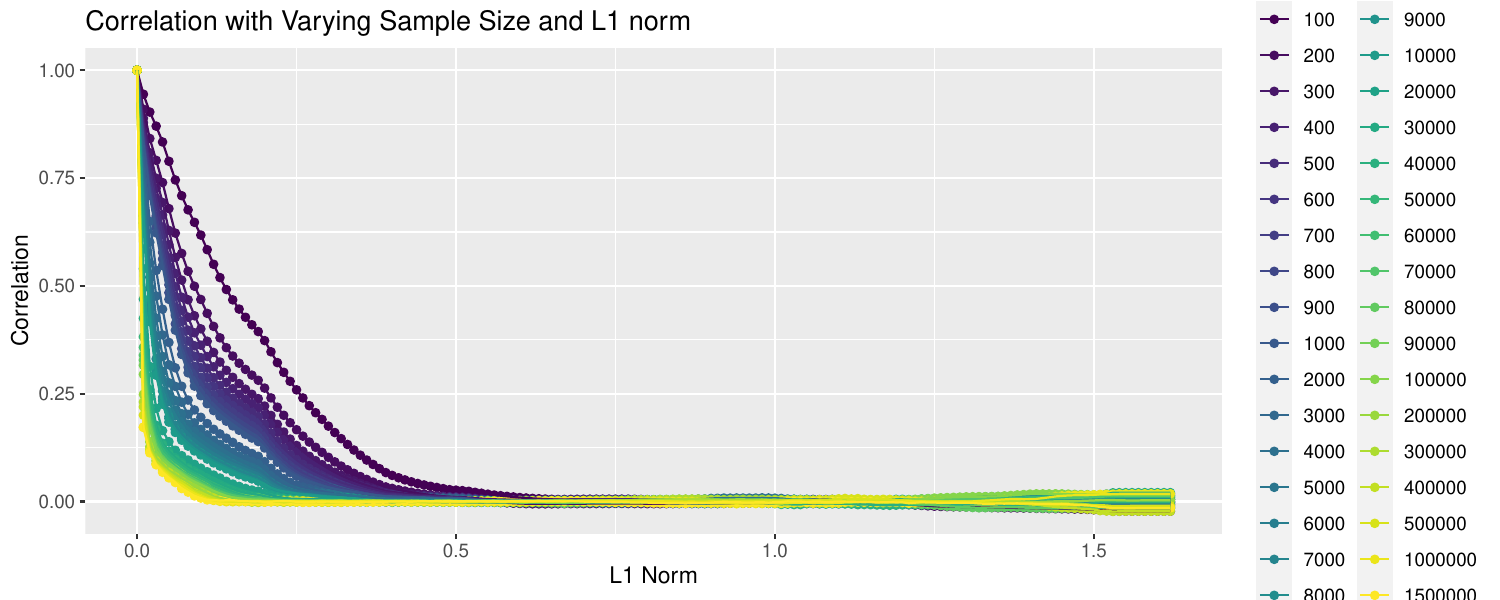}
  \caption{Correlation as a function of sample size and $L_1$ norm.}
  \label{fig:1}
\end{figure}
The figure demonstrates that the linear heuristic \eqref{linear-heuristic} given in the previous
section is conservative: it is evident that correlation decreases super-linearly as $x$ and $\bar x$
become separated.

Figure~\ref{fig:2} plots the correlation on a logarithmic scale, which shows that
that correlation decay is exponential in a neighborhood of unity. In other words, simulations
suggest that the correct heuristic is of the shape
\begin{equation}
\label{eq:28}
\biggl|\frac{\Cov(\RF(x), \RF(\bar x))}{\sqrt{\Var \RF(x) \cdot \Var \RF(\bar x)}}\biggr| \approx
e^{-\lambda\| x - \bar x \|_1}
\end{equation}
for a suitable $\lambda$.
\begin{figure}[t]
  \centering
\includegraphics[width=\textwidth]{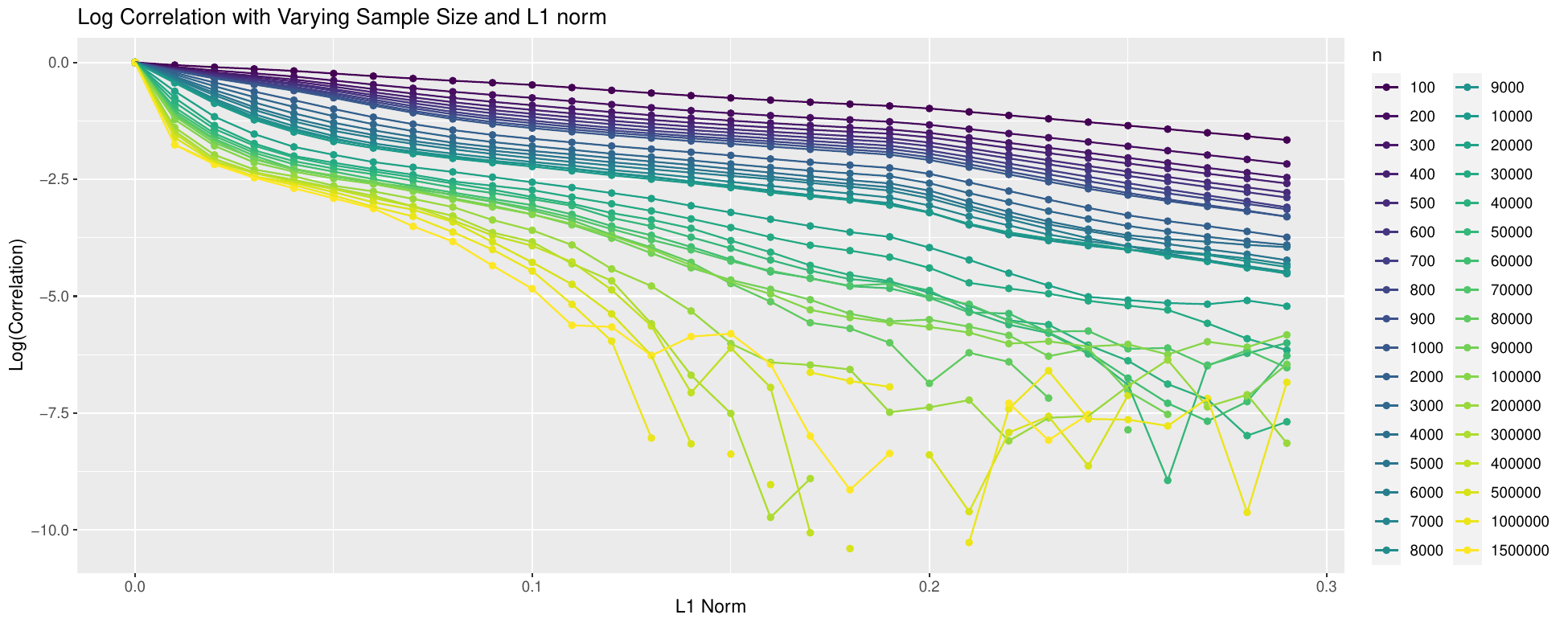}
  \caption{Logarithm of correlation as a function of sample size and $L_1$ norm.}
  \label{fig:2}
\end{figure}

\section{Conclusion}
Random forests and tree-based methods form an important part of of an
applied data analysis toolkit. In this paper, we studying the
covariance between random forest estimates at several points. We
develop a novel construction of a directed acyclic graph that keeps
track of the splitting probabilities when knowledge of one point is
known (Propositions~\ref{prop:value-bounds} and
\ref{prop:2-bound}). As part of the proof, we establish
stability properties of a class of splitting rules (see Proposition
\ref{prop:split-stability-sufficient}). We also identify (Proposition~\ref{prop:terminal-node-probability})
$M(x, \bar x)$, which (roughly) captures the likelihood of two points
belonging to the same terminal node, as a key quantity in controlling the off-diagonal term
of the covariance matrix of the multivariate random forest. 

In this way, this paper provides the a theoretical basis for performing inferences
on functionals of target function (e.g., a heterogeneous treatment effect) when
the functional is based on values of the target function at multiple points in the feature space.
Specifically, we show that the covariance vanishes in the limit relative to the variance,
and provide heuristics on the size of the correlation in finite samples. 

We close with discussing a couple avenues for future research.
The first is extending our framework to cover
categorical or discrete-valued
features. Here, new assumptions would be required in order to maintain
the guarantee that node sizes are ``not too small.'' Second, our
bounds---after potential improvements---on the covariance matrix
of the random forest may be used with the recent results of
\cite{chernozhukov2017,chen2018} in order to provide finite sample
Gaussian approximations. This would provide a sounder theoretical underpinning of
for our heuristics, and increase the usefulness of this paper for practitioners.

\appendix
\section{Proofs}
\begin{proof}[Proof of Proposition~\ref{prop:hoeffding}]
For random vectors in $\R^q$, define the inner product
\begin{equation}
\label{eq:14}
\langle X, Y \rangle \coloneqq \E(X^{\intercal} M Y).
\end{equation}
For each subset $A \subseteq \{ 1, \dots, n \}$, let $H_A$
be the set of square-integrable random vectors of the form
$g(X_i : i \in A)$, where $g$ is a function of $|A|$ arguments,
satisfying the condition that
\begin{equation}
\E(g(X_i : i \in A) \mid \{ X_i : i \in B \}) = 0
\end{equation}
for all subsets $B \subsetneq A$. It is easy to see that 
collection $H_{A}$ are pairwise orthogonal as $A$ ranges over subsets $\{ 1, \dots, n \}$.
By induction on $r=|A|$, the direct sum $\bigoplus_{B \subseteq A} H_B$ is equal
to the set of all statistics which are functions of $\{ X_i : i \in A \}$.
In particular, $\bigoplus_A H_A$ is the set of all statistics based on $\{ X_1, \dots, X_n \}$.
When the variables $\{ X_1, \dots, X_n \}$ are IID, then $H_A$ depends only on $|A|$
in that there exist collections of functions $H_0, H_1, \dots, H_n$, where $H_k$ is a collection
of $k$-ary functions, such that
\begin{equation}
H_A = \{ g(X_i : i \in A) : g \in H_{|A|} \}.
\end{equation}
The proof is complete by letting $f_k$ be the projection of $f$
onto $H_k$ according to the inner product given in \eqref{eq:14}.
\end{proof}

\begin{proof}[Proof of Proposition~\ref{prop:trace-bound}]
We will prove the slightly more general statement that if $A_n$
and $B_n$ are two sequences of square matrices with bounded entries
such that
\begin{equation}
\label{eq:15}
B_{ii} > \delta \text{ for some $\delta$ for all $n$}\quad\text{and}\quad
A_{ii} \geq \frac{B_{ii}}{\log n}
\end{equation}
and $A_{ij} = o(1/\log n)$, then $\tr(A^{-1} B)\to 0$. To prove this, start with the determinant
formula $\det A = \sum_{\pi} (-1)^{\operatorname{sgn} \pi} \prod_{i=1}^q a_{i \pi_i}$,
where the sum runs over permutations $\pi$
of $\{ 1, \dots, n \}$ and $\operatorname{sgn} \pi$ is the sign of the permutation.
Since the off-diagonal entries of $A_{ij}$ are assumed to vanish relative to $A_{ii}$,
we have $|\det A| \sim \prod A_{ii}$, 
where the notation $a \sim b$ stands for $c |b| \leq |a| \leq c' |b|$ for constants
$c$ and $c'$ not depending on $n$.
Next, recall Cramer's Rule
\begin{equation}
(A^{-1})_{ii} = \det A_{-i}/\det A,
\end{equation}
where $A_{-i}$ is the matrix $A$ with its $i$-th row and $i$-th column removed.
A similar argument shows that $|\det A_{-i}| \sim \prod_{j \neq i} A_{jj}$, whence
\begin{equation}
\label{eq:16}
(A^{-1})_{ii} \sim \frac{1}{A_{ii}}.
\end{equation}
In particular, the $i$-th diagonal entry of the matrix $A^{-1} B$ is given by
\begin{equation}
\label{eq:17}
(A^{-1}B)_{ii} = (A^{-1})_{ii} B_{ii} + \sum_{j \neq i} (A^{-1})_{ij} B_{ji}
\sim  \frac{B_{ii}}{A_{ii}} \leq \log n,
\end{equation}
where the final relation is due to the fact that $(A^{-1})_{ij}$ is itself
a polynomial in the entries of $A$ (viz., the cofactor matrix of $A$) divided by the determinant.
Therefore, the trace of $A^{-1}B$ is on the order of $\log n$, since
the dimension $q \times q$ of each matrix is fixed. Using the subsample size $s = n^{\beta}$,
so that $s/n = n^{-(1-\beta)}$ completes the proof.
\end{proof}

\begin{proof}[Proof of Proposition~\ref{prop:terminal-node-probability}]
Recall that the splitting algorithm has a probability $\delta$ chance
of splitting on the $j$-th axis. Since each terminal node contains a
constant number of points, the number of terminal nodes is equal (up to constant)
to the subsample size $s$. Therefore, the number of splits 
required to reach a terminal node is bounded (by a constant) by $\log_2 s/K = \log s/K$,
where $K=2k-1$ is the the maximum size of a leaf.

Since $x \neq \bar x$, we have
\begin{equation}
0 < \| x - \bar x \|_{\infty} \leq \| x - x_1 \|_{\infty} + \| \bar x - x_1 \|_{\infty}
\end{equation}
for all $x_1 \in \X$. In particular, given any $x_1$ there exists some $j \in \{ 1, \dots, p \}$
and a constant $\beta$ for which either $|x_j - x_{1j}| > \beta$
or $|\bar x_j - x_{1j}| > \beta$. Without loss of generality, we may assume that the former case holds.
Certainly, a necessary condition for $X_1 = x_1$ to belong to the same leaf node
as $x$ (i.e., a necessary condition for $\{ I = 1 \}$) is for the length
of the first axis of that leaf node to be larger than $\beta$. 

Let $c_j(x)$ denote the number of splits in coordinate $j$ along the sequence
of splits leading to the terminal node containing $x$. By our randomization assumption,
each split has at least an independent chance $\delta$ of being chosen,
and since we cycle through each coordinate (c.f., Assumption 2),
\begin{equation}
\label{eq:18}
c_j(x) \succeq \frac{1}{p} B\biggl(\log\frac{s}{K}, \delta\biggr)
\qquad \text{where $\succeq$ stands for stochastic dominance}.
\end{equation}
Per Assumption 4, that each split along the $j$-th axis decreases its
length by a factor of at least $(1-\alpha)$. Since splitting
begins in the unit hypercube, 
\begin{equation}
\label{eq:19}
(1-\alpha)^{c_1(x)} \geq \beta \implies
c_1(x) \leq \frac{\log \beta}{\log (1-\alpha)} \eqqcolon \rho.
\end{equation}
Since $\{ I = 1 \}$
requires that the length of the first axis to exceed $\beta$ (a constant), this proves
\begin{equation}
\E(I \mid X_1 = x_1) \leq 
\P\biggl[ B\biggl( \log\frac{s}{K}, \delta \biggr) \leq p \rho \biggr].
\end{equation}
Since $p\rho$ is a constant, we may conclude
\begin{equation}
\label{eq:20}
\P\biggl[ B\biggl( \log\frac{s}{K}, \delta \biggr) \leq p \rho \biggr]
\leq (1 - \delta + o(1))^{\log s/K} = \biggl( \frac{1}{s} \biggr)^{\log \frac{1}{1-\delta + o(1)}}.
\end{equation}

Finally, recall base of the logarithm is two since the tree is binary. Therefore,
if we choose $\delta > 1/2$, the exponent exceeds $1$ and the proof is complete.
\end{proof}

\begin{proof}[Proof of Proposition~\ref{prop:split-stability-sufficient}]
The easiest case is the splitting decision in the root node $[0, 1]^p$, so we start
here. We prove the result by introducing a coupling between the split decisions 
with and without conditioning on $X_1 = x_1$
\begin{equation}
\label{eq:34}
\begin{split}
  S &= \argmax_i f_i
\biggl( \frac{1}{s} \sum_{i=1}^s m_1(X_i),
\dots, \frac{1}{s} \sum_{i=1}^s m_Q(X_i) \biggr) \eqqcolon f_i\\
S' &= \argmax_i f_i \biggl( \frac{1}{s}\biggl[m_1(x_1) + \sum_{i=2}^s m_1(X_i) \biggr],
\dots,
\frac{1}{s}\biggl[m_Q(x_1) + \sum_{i=2}^s m_Q(X_i) \biggr]
\biggr) \eqqcolon f_i'.
\end{split}
\end{equation}
Here, $S$ is the split made on the sample $X_1, \dots, X_s$
and $S'$ is the split made on the sample conditional on $X_1 = x_1$. Note that 
we may assume without loss of generality
that the splits are not randomly chosen, since on that event the splitting
probabilities are trivially equal.
Clearly, a necessary condition for $S \neq S'$ is
the existence of a pair $1 \leq i \neq j \leq P$ for which
\begin{equation}
\label{eq:35}
f_i > f_j \quad \text{but} \quad f_j' > f_i'.
\end{equation}
Since $f$ is Lipschitz and its arguments are subExponential by assumption,
the quantities $f_i$ and $f_j$ concentrate around their respective limits $f_i(\E m_1, \dots, \E m_Q)$
and $f_j(\E m_1, \dots, \E m_Q)$;
hence, whenever $f_i(\E m_1, \dots, \E m_Q) > f_j(\E m_1, \dots, \E m_{Q})$
we will have
\begin{equation}
\label{eq:36}
f_i - f_j > \frac{1}{s}
\quad \text{with probability at least $1 - O(e^{-cs})$ for some constant $c$.}
\end{equation}
However, the difference of the arguments of $f_i$ in (\ref{eq:34})
differ by at most $1/s$, the difference due to the $|m_1(x_1)-m_1(X_1)|/s$. By Lipschitz
continuity, a change of $1/s$ in the arguments changes the function values by a proportional
amount, so $f_j' > f_i$ is impossible when $f_i - f_j > 1/s$. It follows that 
$(f_i > f_j, f_j' > f_i')$ occurs with probability at most $O(e^{-cs})$,
and we finish by noting taking a union over the $\binom{P}{2}$ pairs $(i,j)$. 
This result result for $[0, 1]^p$ will be referred to as the base case.

Note that the above actually proves something stronger, namely that for every
split $\tau$,
\begin{equation}
\label{eq:37}
\P(S \neq s \mid S' = \tau) < e^{-cs} \quad \text{and} \quad
\P(S' \neq s \mid S = \tau) < e^{-cs}.
\end{equation}
It follows that see that for any $\tau$, the total variation
distance between $(X_2, \dots, X_s \mid S = \tau)$
and $(X_2, \dots, X_s \mid S' = \tau)$ is at most $e^{-cs}$.
To see this, note that $S$ and $S'$ are functions of $X_i$ only,
so that the densities of these two distributions are
\begin{equation}
\label{eq:38}
p(x) = \1(S(x) = \tau) \frac{p(x)}{\P(S = \tau)}
\quad \text{and}\quad
p'(x) = \1(S'(x) = \tau) \frac{p(x)}{\P(S' = \tau)}
\end{equation}
respectively. We may assume without loss of generality that $\P(S' = \tau) \geq \P(S = \tau)$
so that the total variation is
\begin{equation}
\label{eq:39}
\begin{split}
\int |p(x) - p'(x)|
&= \int_{S = \tau} p(x) - p'(x) + \int_{S \neq \tau, S' = \tau}
p'(x) \\
&= 1 - \frac{\P(S' = \tau, S = \tau)}{\P(S' = \tau)}
+ \frac{\P(S' = \tau, S \neq \tau)}{\P(S'=\tau)}  
= 2\P(S \neq \tau \mid S' = \tau) < e^{-cs}.
\end{split}
\end{equation}
The upshot is that when considering the splitting probability in the next node,
we can ignore the difference in the distribution of $X_2, \dots, X_s$
when conditioning on $S = \tau$ versus conditioning on $S' = \tau$
and pay a cost $O(e^{-cs})$.

Now consider bounding the difference of the splitting probabilities at the
next split
\begin{equation}
\label{eq:40}
\P(S_2 = s \mid S_1 = \tau) - \P(S_2 = s \mid S_1 = \tau,  X_1 = x_1).
\end{equation}
Again, the strategy is to find a coupling $(S_2, S_2')$ such that 
\begin{equation}
\label{eq:41}
S_2 \sim (S_2 \mid S_1 = \tau) \quad \text{and}\quad 
S_2' \sim (S_2 \mid S_1= \tau, X_1 = x_1)
\end{equation}
with $\TV(S_2, S_2') \leq e^{- s|v|}$, $|v|$ being the hyper-rectangle corresponding
to one of the halfspaces produced by $\tau$. Since the distribution of $X_1, \dots, X_s$
conditional on $S_1 = \tau$ differs from its unconditional distribution
by an amount $e^{-cs}$ in the total variation distance, we could use the following
coupling
\begin{equation}
\label{eq:42}
\begin{split}
  S_2 &= \argmax f_i\biggl( \frac{1}{n_v} \sum_{X_i \in v} m_1(X_i), \dots,
\frac{1}{s|v|} \sum_{X_i \in v} m_Q(X_i) \biggr)\\
  S_2' &= \argmax f_i\biggl( \frac{1}{n_v} \sum_{X_i'\in v} m_1(X_i'), \dots,
\frac{1}{s|v|} \sum_{X_i \in v} m_Q(X_i') \biggr)
\end{split}
\end{equation}
where $X_i$ follows the distribution of $(X_1, \dots, X_s)$ conditional
on $S_1 = \tau$ and $X_i'$ follows the distribution conditional on $S_1 = \tau$
and $X_1 = x_1$.  By conditioning on $S_1 = \tau$ instead of $S_1' = \tau$,
we increase the total variation by an amount $e^{-cs}$ via the triangle inequality.

Now the rest of the proof is the same as in the base case, noting that with high probability,
the number $n_v$ of points in $v$ is equal to $s|v|$ up to an multiplicative constant $(1-\eta)$
with probability $e^{-s \eta^2}$. The previous bounds
are applied recursively at each depth $l$ of the DAG. 
At depth $l$, we incur an ``approximation cost'' from the total
variation distance bounded by $e^{-|v|s}$. Since
$s|v| \geq \alpha^l$, it follows that $l \leq O(\log s|v|)$,
whence the cumulative cost at depth $l$
is $O(\log s|v| \cdot e^{-c |v| s})$.
Putting everything together, we have proven that
\begin{equation}
\label{eq:43}
\TV(p, p') \leq O(\log (|v| s) \cdot e^{-c |v| s}) \leq
 o\biggl( \frac{1}{s|v|} \biggr)^{1+\epsilon}
\quad \text{for some $\epsilon > 0$}.\qedhere
\end{equation}
\end{proof}
\begin{proof}[Proof of Proposition~\ref{prop:value-bounds}]
The claim is trivially true (by choosing an appropriate constant)
if $v$ is a terminal node. Thus, fix a non-terminal node $v$
such that $x_1 \notin v$ and let
\begin{equation}
\mathbf{X} = \mathbf{X}_v = \{ X_i : X_i \in v\}
\end{equation}
denote the set of points landing in $v$, so that
$k \coloneqq |\mathbf{X}| \in \{ 1, \dots, n-1 \}$.

Recall that $f=f(v)$ and $f=f'(v)$ are the respective expectations of the tree estimator
at $x$ when the sequence of splits is such that $v$ is the current subset of $\X$
containing $x$, with $f'$ being calculated conditional on $X_1 = x_1$. It follows that
$f$ and $f'$ are functions of the distribution of its ``input vector'' $\mathbf{X}$.
In a slight abuse of notation, let $\Pi$ and $\Pi'$ denote sequence of splits
distributed according to the probabilities $p$ and $p'$. 
We will show that, for each $k \in \{ 1, \dots, n-1 \}$, the total variation distance
of
\begin{equation}
(\mathbf{X} \mid |\mathbf{X}| = k, \Pi = v)
\quad \text{and} \quad
(\mathbf{X} \mid |\mathbf{X}| = k, \Pi' = v, X_1 = x_1)
\label{twodist}
\end{equation}
is bounded by $(\log s) \cdot \Delta(s|v|)$.  This will suffice to bound $|f-f'|$
by the variational definition of total variation distance
\begin{equation}
\label{eq:tvvariation}
\operatorname{TV}(p, p') = \sup_{|g| \leq 1} | \E_{A \sim p} g(A) - \E_{A \sim p'} g(A)|.
\end{equation}
Since $x_1 \notin v$,
$X_1$ is not an element of $\mathbf{X}$, so that $\mathbf{X}$ and $X_1$ are independent.
Since the split $\Pi'$ is distributed according to splitting probabilities when $X_1 = x_1$,
we have, 
\begin{equation}\label{eq:removex}
\dist(\mathbf{X} \mid |\mathbf{X}| = k, \Pi' = v, X_1 = x_1) =
\dist(\mathbf{X} \mid |\mathbf{X}| = k, \Pi' = v).
\end{equation}
The depth of  $v$ is at most $\log_2 s$, so that
$\P(\Pi = \Pi') \geq 1 - (\log_2 s) \Delta(s|v|)$
by applying the splitting stability assumption $\log_2 s$
many times using a union bound.
By \eqref{eq:removex} the total variation distance
of the distributions in \eqref{twodist} differs by $\P(\Pi \neq \Pi')$,
and the result follows.
\end{proof}
\begin{proof}[Poor of Proposition~\ref{prop:2-bound}]
The idea is to recursively expand the formulas $\E T$ and $\E(T \mid X_1)$
in terms of the directed acyclic graph. We start with
\begin{equation}
  \begin{split}
|\E T - \E(T \mid X_1 = x_1)| &=
\biggl| \sum_v p(e) f(e) - \sum_v p'(e) f'(e) \biggr| \\
&\leq \sum_v |p(e) - p'(e)| f'(e)
+ \biggl| \sum_v p'(e) (f(e) - f'(e)) \biggr|,
  \end{split}
\end{equation}
where the sum runs over nodes $v$ after the first split,
i.e., $[0,1]^p \to v$. The second summand may be split
into two, one over $x_1 \in v$ and the other $x_1 \notin v$.
If we assume splitting stability
held with function $\Delta$, then Proposition~\ref{prop:value-bounds}
allows us to bound the second term, so that
\begin{equation}
  \begin{split}
|\E T - \E(T \mid X_1 = x_1)|
&\leq \sum_v |p(e) - p'(e)| f'(e)
+ \biggl| \sum_v p'(e) (f(e) - f'(e)) \biggr|\\
&\leq \Delta(\alpha s) + (\log s) \Delta(\alpha s) +
\sum_{x_1 \in v} p'(e) |f(e) - f'(e)|,
  \end{split}
\end{equation}
where used the fact that $|v| \geq \alpha$.
Now, each of the terms $|f(e) - f'(e)|$ may be bounded
by $\Delta(\alpha^2 s) + \log(s) \Delta(\alpha^2 s) + \sum_{x_1 \in w}(\cdots)$.
Continuing in this way, we have
\begin{equation}
|\E T - \E(T \mid X_1 = x_1)| \leq
\log(s) (\Delta(s) + p_1\Delta(\alpha s) + p_2 \Delta(\alpha^2 s) + \dots)
= \log s \sum_{\ell=0}^{\infty} p_{\ell} \Delta(\alpha^{\ell}s),
\end{equation}
where $p_{\ell}$ is the probability that $x$ and $x_{\ell}$
belong to the same node after $\ell$ splits. In other words,
if we let $L$ be the number of splits after which $x$ and $x_1$
are separated, then
\begin{equation}
p_{\ell} = \P(L \geq \ell).
\end{equation}

Since $x \neq x_1$, we may assume without loss of generality that 
$\| x - x_1 \| > \beta$ for some fixed $\beta$ (c.f. the proof of Proposition~\ref{prop:terminal-node-probability}).
In particular,
\begin{equation}
p_{\ell} \leq (1-\delta + o(1))^{\ell}
\end{equation}
for sufficiently large $\ell$.
Moreover, $\Delta(\alpha^l s) = \frac{1}{s^{1+\epsilon}}(\frac{1}{\alpha^{1+\epsilon}})^{\ell}$,
whence $\delta > 1-\alpha^{1+\delta}$ is enough to ensure that
infinite series is less than $\frac{1}{s^{1+\epsilon}}$.
In particular, as $s \to 0$, we may take $\epsilon \to 0$, so that the restriction
is satisfied (after suitable constants) by $\delta > 1-\alpha$. This completes
the proof.
\end{proof}
\singlespacing
\bibliographystyle{unsrt}
\addcontentsline{toc}{section}{References}
\bibliography{references}

\begin{thebibliography}{10}

\bibitem{breiman01}
Leo Breiman.
\newblock Random forests.
\newblock {\em Machine Learning}, 2001.

\bibitem{lightgbm}
Guolin Ke, Qi~Meng, Thomas Finley, Taifeng Wang, Wei Chen, Weidong Ma, Qiwei
  Ye, and Tie-Yan Liu.
\newblock Lightgbm: A highly efficient gradient boosting decision tree.
\newblock In {\em Proceedings of the 31st International Conference on Neural
  Information Processing Systems}, NIPS'17, page 3149–3157, Red Hook, NY,
  USA, 2017. Curran Associates Inc.

\bibitem{xgboost}
Tianqi Chen and Carlos Guestrin.
\newblock {XGBoost}: A scalable tree boosting system.
\newblock In {\em Proceedings of the 22nd ACM SIGKDD International Conference
  on Knowledge Discovery and Data Mining}, KDD '16, pages 785--794, New York,
  NY, USA, 2016. ACM.

\bibitem{grfgithub}
Susan Athey, Julie Tibshirani, and Stefan Wager.
\newblock The grf algorithm.
\newblock \url{https://github.com/grf-labs/grf/}, 2017.

\bibitem{catboost}
Liudmila Prokhorenkova, Gleb Gusev, Aleksandr Vorobev, Anna~Veronika Dorogush,
  and Andrey Gulin.
\newblock Catboost: Unbiased boosting with categorical features.
\newblock In {\em Proceedings of the 32nd International Conference on Neural
  Information Processing Systems}, NIPS'18, page 6639–6649, Red Hook, NY,
  USA, 2018. Curran Associates Inc.

\bibitem{gregorutti2017correlation}
Baptiste Gregorutti, Bertrand Michel, and Philippe Saint-Pierre.
\newblock Correlation and variable importance in random forests.
\newblock {\em Statistics and Computing}, 27(3):659--678, 2017.

\bibitem{strobl2008conditional}
Carolin Strobl, Anne-Laure Boulesteix, Thomas Kneib, Thomas Augustin, and Achim
  Zeileis.
\newblock Conditional variable importance for random forests.
\newblock {\em BMC bioinformatics}, 9(1):307, 2008.

\bibitem{genuer2010variable}
Robin Genuer, Jean-Michel Poggi, and Christine Tuleau-Malot.
\newblock Variable selection using random forests.
\newblock {\em Pattern recognition letters}, 31(14):2225--2236, 2010.

\bibitem{rubin}
Donald~B. Rubin.
\newblock Estimating causal effects of treatments in randomized and
  nonrandomized studies.
\newblock {\em Journal of Educational Psychology}, 66(5):688--701, 1974.

\bibitem{imbens_rubin_2015}
Guido~W. Imbens and Donald~B. Rubin.
\newblock {\em Causal Inference for Statistics, Social, and Biomedical
  Sciences: An Introduction}.
\newblock Cambridge University Press, 2015.

\bibitem{propensity_score}
Keisuke Hirano, Guido~W. Imbens, and Geert Ridder.
\newblock Efficient estimation of average treatment effects using the estimated
  propensity score.
\newblock {\em Econometrica}, 71(4):1161--1189, 2003.

\bibitem{crf}
Stefan Wager and Susan Athey.
\newblock Estimation and inference of heterogeneous treatment effects using
  random forests.
\newblock {\em Journal of the American Statistical Association},
  113(523):1228--1242, 2018.

\bibitem{athey2019}
Susan Athey, Julie Tibshirani, and Stefan Wager.
\newblock Generalized random forests.
\newblock {\em Ann. Statist.}, 47(2):1148--1178, 04 2019.

\bibitem{Wager2015AdaptiveCO}
Stefan Wager and Guenther Walther.
\newblock Adaptive concentration of regression trees, with application to
  random forests.
\newblock {\em arXiv: Statistics Theory}, 2015.

\bibitem{arsov2019stability}
Nino Arsov, Martin Pavlovski, and Ljupco Kocarev.
\newblock Stability of decision trees and logistic regression, 2019.

\bibitem{chernozhukov2017}
Victor Chernozhukov, Denis Chetverikov, and Kengo Kato.
\newblock Central limit theorems and bootstrap in high dimensions.
\newblock {\em Ann. Probab.}, 45(4):2309--2352, 07 2017.

\bibitem{chen2018}
Xiaohui Chen.
\newblock Gaussian and bootstrap approximations for high-dimensional
  u-statistics and their applications.
\newblock {\em Ann. Statist.}, 46(2):642--678, 04 2018.

\bibitem{hastie2009elements}
Trevor Hastie, Robert Tibshirani, and Jerome Friedman.
\newblock {\em The elements of statistical learning: data mining, inference,
  and prediction}.
\newblock Springer Science \& Business Media, 2009.

\bibitem{tensorflow}
Google and other contributors.
\newblock {TensorFlow}: Large-scale machine learning on heterogeneous systems,
  2015.
\newblock Software available from tensorflow.org.

\bibitem{sklearn}
F.~Pedregosa, G.~Varoquaux, A.~Gramfort, V.~Michel, B.~Thirion, O.~Grisel,
  M.~Blondel, P.~Prettenhofer, R.~Weiss, V.~Dubourg, J.~Vanderplas, A.~Passos,
  D.~Cournapeau, M.~Brucher, M.~Perrot, and E.~Duchesnay.
\newblock Scikit-learn: Machine learning in {P}ython.
\newblock {\em Journal of Machine Learning Research}, 12:2825--2830, 2011.

\bibitem{quaedvlieg2019multi}
Rogier Quaedvlieg.
\newblock Multi-horizon forecast comparison.
\newblock {\em Journal of Business \& Economic Statistics}, pages 1--14, 2019.

\bibitem{vdv}
A.~W. van~der Vaart.
\newblock {\em Asymptotic Statistics}.
\newblock Cambridge Series in Statistical and Probabilistic Mathematics.
  Cambridge University Press, 1998.

\bibitem{billingsley2008probability}
Patrick Billingsley.
\newblock {\em Probability and measure}.
\newblock John Wiley \& Sons, 2008.

\bibitem{diciccio2020clt}
Cyrus DiCiccio and Joseph~P Romano.
\newblock Clt for u-statistics with growing dimension.
\newblock Technical report, Stanford University, 2020.

\end{thebibliography}
\end{document}